\documentclass[11pt,reqno]{amsart}
\usepackage{amssymb}

\usepackage{amsthm,amsfonts,amssymb,euscript,hyperref,color}
\usepackage{comment}
\usepackage{mathtools}
\usepackage{stmaryrd}
\usepackage{mathrsfs}
\usepackage{pifont}
\usepackage{amsfonts,amsmath,amssymb,amsthm,amscd,cancel}
\usepackage{graphicx}
\usepackage[]{geometry}
\usepackage{verbatim}
\usepackage{mathrsfs}
\usepackage{fancyhdr}
\usepackage{footnpag,footmisc}

\usepackage[displaymath]{lineno}
\theoremstyle{definition}
\newtheorem{theorem}{Theorem}[section]

\newtheorem{remark}{Remark}[section]

\allowdisplaybreaks[4]

\DeclareMathAlphabet{\mathsfsl}{OT1}{cmss}{m}{sl}
\numberwithin{equation}{section}

\newcommand{\D}{\mathrm{d}}

\def\Lb{\underline{L}}

\def\omegab{\underline{\omega}}

\def\ub{\underline{u}}
\def\Cb{\underline{C}}

\newcommand{\Db}{\underline{D}}

\newcommand{\hb}{\underline{h}}

\begin{document}

\title[Interior instability]{Interior instability of naked singularities of a scalar field}

\author[Junbin Li]{Junbin Li}
\address{Department of Mathematics, Sun Yat-sen University\\ Guangzhou, China}
\email{lijunbin@mail.sysu.edu.cn}

\date{}

\maketitle

\begin{abstract}
We show that the $k$-self-similar naked singularity solutions of the spherically symmetric Einstein--Scalar field system are unstable to black hole formation under perturbations that are totally supported in the interior region, in all regularities strictly below the threshold.  The instability below the threshold is also established for exterior perturbations. We also show that general naked singularity solutions are unstable under interior BV perturbations, which provides a new  insight  into understanding the weak cosmic censorship conjecture for this model.  In contrast to all previous results on the exterior instability of naked singularities (and even trapped surface formation), where only a single incoming null cone is considered, the novel approach to proving the interior instability is analyzing a family of incoming null cones becoming more and more singular.
\end{abstract}


\setcounter{tocdepth}{1}


\allowdisplaybreaks

\section{Introduction}

\subsection{Preliminaries} One of the most important problems in general relativity is the weak cosmic censorship conjecture, which states that naked singularities do not \emph{generically}  form in gravitational collapse. Here gravitational collapse refers to the solution (or precisely, Cauchy development) of the Einstein equations with asymptotically flat initial data. Loosely speaking, naked singularities are singularities that are not hidden inside a black hole, whose future null cone extends to asymptotic region in the spacetime, so that the world-lines of far away observers will intersect this null cone, which means that the observers receive signals from naked singularities. In a precise mathematical language, the weak cosmic censorship states that generic asymptotically flat initial data lead to a maximal Cauchy development possessing a complete future null infinity (see \cite{Chr99cqg} for the precise definition of the completeness of the future null infinity without using the language of conformal compactification), while a solution with a naked singularity means an asymptotically flat Cauchy development without complete future null infinity (also see \cite{R-Sh} for a precise definition of this).
 
   The original version of the weak cosmic censorship conjecture proposed by Penrose in \cite{Pen69} states that NO naked singularities can form and the end state of gravitational collapse should be black hole formation.  However, as examples of naked singularity solutions in Einstein--scalar field system were constructed (\cite{Cho, Chr94}), and more naked singularity solutions were constructed for other matter fields and in vacuum (\cite{R-Sh, O-P, G-H-J}), the word ``generically'' in the current version of the weak cosmic censorship conjecture is necessary. One of the strategies for proving the weak cosmic censorship conjecture would then be showing that all possible naked singularity solutions are unstable under perturbations from initial data. 

In this paper we consider the spherically symmetric solutions of Einstein--scalar field system,   a couple system of a Lorentzian manifold $(\mathcal{M},g)$ and a real scalar function $\phi$, 
\begin{align}\label{ES}
\mathbf{Ric}_{\alpha\beta}-\frac{1}{2}\mathbf{R}g_{\alpha\beta}=2\mathbf{T}_{\alpha\beta}\end{align}
where
\begin{align*}
\mathbf{T}_{\alpha\beta}=\nabla_\alpha\phi\nabla_\beta\phi-\frac{1}{2}g_{\alpha\beta}g^{\mu\nu}\nabla_\mu\phi\nabla_\nu\phi,
\end{align*}
and $\phi$ obeys linear wave equation
\begin{align}\label{Wave}
\Box_g\phi=0.
\end{align}
As the Birkhoff theorem eliminates gravitational dynamics in spherical symmetry, the (spherically symmetric) Einstein--scalar field is the simplest toy model to gain insights into the vacuum Einstein equations. Christodoulou studied this model extensively (including papers \cite{Chr87, Chr91, Chr93,Chr94,Chr99}). He constructed self-similar naked singularity solutions (\cite{Chr91}), which is the main object considered in this paper, obtained a sharp extension principle for $C^1$ solution, studied extensively the concept of BV solution (\cite{Chr93}) and based on this, he showed that (\cite{Chr91, Chr99}) all naked singularities arising in the system \eqref{ES} are unstable to black hole formation under BV perturbations, which is the scale invariant $L^1$ based topology in spherical symmetry.

\begin{figure}
\includegraphics[width=2.4 in]{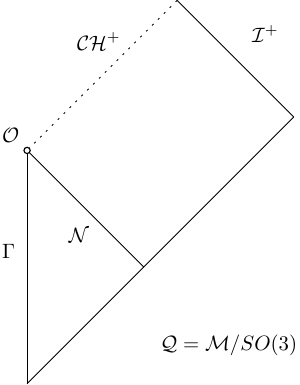}
\caption{Penrose diagram of a spherically symmetric naked singularity solution}
\label{fig:nakedsingularity}
\end{figure}

The Penrose diagram of a spherically symmetric naked singularity solution can be depicted as in Figure \ref{fig:nakedsingularity}. It is the Cauchy development of some initial data set given on an outgoing null cone from a point. It terminates at a first singularity $\mathcal{O}$ at the center $\Gamma$. The future null cone of $\mathcal{O}$ is called Cauchy horizon $\mathcal{CH}^+$, which intersects the future null infinity $\mathcal{I}^+$. The spacetime has no trapped region. The interior region of a naked singularity solution refers to the region to the past of the singularity, and the boundary of the interior region is the past null cone $\mathcal{N}$ of $\mathcal{O}$. The exterior region  refers to the future of $\mathcal{N}$. The instability result proved in \cite{Chr99} (and also an alternative   proof without contradiction argument in \cite{Liu-Li}) is the following.  Given that the mass radio $\mu=\frac{2m}{r}\nrightarrow0$ as approaching $\mathcal{O}$ along its past null cone $\mathcal{N}$ (which holds true for all self-similar examples constructed in \cite{Chr94} where $\mu$ is a non-zero constant along $\mathcal{N}$) , and as a consequence, the blue shift 
\begin{equation}\label{blueshift}\int_{\mathcal{N}}\frac{\mu}{1-\mu}\frac{1}{r}\D r\end{equation}
being infinite,  a precise small BV perturbation\footnote{By BV perturbations  we mean the perturbation is small in BV topology, that is, the total variation of the difference is small. But the perturbations itself can have higher regularity, like AC, absolute continuous.} can be constructed, so that a black hole forms before the singularity (see Figure \ref{fig:nakedperturbation}). This family of perturbations is supported in the exterior region. In the constructed proof in \cite{Liu-Li}, we use instead
\begin{equation}\label{blueshift2}\int_{\mathcal{N}}r\left(\frac{n\phi}{nr}\right)^2\D r\end{equation}
 which is also infinite and related to the null lapse, where $n$ in the incoming null direction.

\begin{figure}
\includegraphics[width=2.8 in]{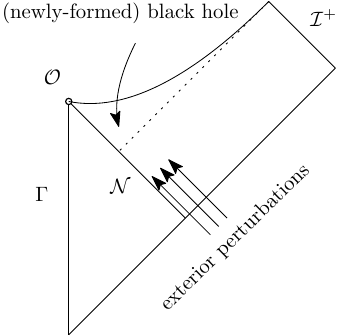}
\caption{Exterior instability to black hole formation}
\label{fig:nakedperturbation}
\end{figure}

\begin{remark}
The infiniteness of \eqref{blueshift2} is easier to utilize in non-spherically symmetric setting, see for example \cite{Li-Liu1, A}, in which non-spherically symmetric perturbations of spherically symmetric naked singularity of a scalar field are considered. In \cite{Liu-Li} we showed that the infiniteness of \eqref{blueshift} implies the infiniteness of \eqref{blueshift2}. See also Remark \ref{blueshiftequal} for the relation between these two integrals.
\end{remark}

The problem then arises what will happen if we add interior perturbations, by which we mean perturbations not supported in the exterior region.   One of the main reasons for studying interior perturbations is to find dispersive perturbations, which can be seen in numerical studies in critical collapse of scalar field (see \cite{Cho}, or a comprehensive survey \cite{G-H-M} which includes both numerical and mathematical results). Here by dispersive perturbations we mean perturbations under which the singularity disappears and the solution becomes global. Dispersive perturbations cannot be found in exterior perturbations since they will not affect the singularity by finite speed of propagation, and in fact one of the main difficulties in studying interior perturbations is that the singularity is affected by interior perturbations. It is worth mentioning that the critical collapse naked singularity, which is discretely self-similar, are significantly different from the examples constructed in \cite{Chr94}, which we may call ``continuously'' self-similar solutions. It is not known whether  dispersive perturbations exist for continuously self-similar naked singularity solutions of scalar field even in numerical level. On the other hand, it is also possible that interior perturbations lead to black hole formation (which is also suggested in numerical studies of critical collapse).

In this paper we begin the study of interior instability, by studying the H\"older interior instability of continuously self-similar, or precisely, $k$-self-similar naked singularities constructed in \cite{Chr94} and then generalize the idea to study interior BV instability of general naked singularity solutions.  These perturbations all lead to  black hole formation. 

\subsection{The main results for $k$-self-similar naked singularities} 

Let us first review some basic facts about the $k$-self-similar solutions, more explicit information is presented in Section \ref{section:kselfsimilar}.   A spherically symmetric spacetime $(\mathcal{M},g)$ means that $SO(3)$ acts isometrically on it and $\mathcal{M}/SO(3)$ will be a $2$-d Lorentzian manifold with boundary $\Gamma$, which is called the center of the symmetry. The area radius of the orbits, denoted by $r$, is a geometric function in spherically symmetric spacetimes. A spherically symmetric solution $(\mathcal{M}, g, \phi)$ to \eqref{ES} should also carry a function $\phi$ obeying \eqref{Wave}, which is also spherically symmetric in the sense that $\phi$ is constant on each orbit.

A $k$-self-similar solution ($k\ne0$) considered in \cite{Chr94} is the following. Let $(\mathcal{M},g)$ be a spherically symmetric solution to \eqref{ES} and $h$ be the quotient metric on $\mathcal{Q}=\mathcal{M}/SO(3)$. Let $r$ be the area radius of the orbits (considered as a function on $\mathcal{Q}$), then $(\mathcal{Q},h,r,\phi)$ is called $k$-self-similar if there exists a one-parameter group of diffeomorphism  $f_a$ for $a>0$ on $\mathcal{Q}$ such that
$$f_a^*h=a^2h, \ f_a^*r=ar,\  f_a^*\phi=\phi-k\log a.$$
The vector field $S$ generated by $f_a$ obeys
$$\mathcal{L}_Sh=2h,\ Sr=r,\ S\phi=-k$$
and is then conformal Killing
 $$\mathcal{L}_Sg=2g.$$
 To study such solutions,  Christodoulou introduced self-similar Bondi coordinates $(u,r), u<0, r>0$, in which the quotient metric reads
\begin{equation}\label{Bondimetric}h=-\mathrm{e}^{2\nu}\D u^2-2\mathrm{e}^{\nu+\lambda}\D u\D r,\end{equation}
where $\nu,\lambda$ and $\phi+k\log(-u)$ are functions of $-\frac{r}{u}$ only. After a suitable rescaling on $u$, the conformal vector field $S$ generated by $f_a$ has the form
$$S=u\frac{\partial}{\partial u}+r\frac{\partial}{\partial r}.$$
For $k^2\in(0,\frac{1}{3})$, Christodoulou was able to show the existence of $k$-self-similar naked singularity solutions, whose Penrose diagram is depicted in Figure \ref{fig:nakedsingularity}. The singularity $\mathcal{O}$ corresponds to $u=r=0$. Recently Le \cite{Le} showed that these $k$-self-similar solutions cannot extend beyond $\mathcal{O}$ in $C^{0,1}$.

 A very important feature of the $k$-self-similar solutions is the finite regularity along the past null cone of the singularity $\mathcal{O}$, which is significantly different from the discretely self-similar critical collapse solution. The critical collapse solution is expected to be smooth but its rigorous  construction is still open (see \cite{Ci-Ke} for some progress in exterior constructions). Precisely speaking, in these $k$-self-similar naked singularities with self-similar Bondi coordinates, $r\frac{\partial\phi}{\partial r}$, the outgoing null derivative parametrized by $r$, only has a finite H\"older regularity $C^{\frac{k^2}{1-k^2}}$ in  $r$ along the past null cone of $\mathcal{O}$\footnote{Note that this finite regularity statement is written in the geometric function $r$, and hence is coordinate independent.} and at least $C^1$ elsewhere. A pioneer work \cite{R-Sh} by Rodnianski--Shlapentokh-Rothman develops methods to establish the stability of naked singularities under high-regularity exterior perturbations. In particular they use these methods to construct exterior of naked singularity solutions in vacuum (also see Singh's papers \cite{JS1, JS2}). Based on these methods, one can show that (see detailed discussions in \cite{JS2}) all exterior small nonlinear perturbations in a specific form (see Figure \ref{fig:nakedstability}), which is in $C^{\frac{\alpha}{1-\alpha}}$, lead to stability, for any $\alpha\ge k^2$, and $\frac{k^2}{1-k^2}$ serves as a threshold regularity.  
 
\begin{figure}
\includegraphics[width=2.6 in]{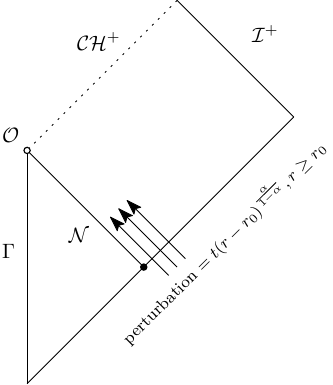}
\caption{Exterior stability under high regularity perturbations. The initial cone is parametrized by $r$ and $r_0$ is the value of $r$ at the its intersection of $\mathcal{N}$.}
\label{fig:nakedstability}
\end{figure}
 
 For interior perturbations, studies have begun by Singh in \cite{JS2}, in which the poor-man's linearization 
 $$\Box_{g_k}\phi=0$$
  was considered, where $g_k$ is a (perturbations of) $k$-self-similar solution for $k^2\ll1$. It was shown that the spherically symmetric solutions have stability (which means that the solution has self-similar bound as approaching the singularity, see \cite{JS2} for precise statement) if the initial data supported in the interior has regularity equal and above the threshold and instability below the threshold.  
However, the generalizations of the above analysis to nonlinear setting remain to be understood.

 The main content of the first theorem of this paper is to study nonlinear interior instability of the  $k$-self-similar naked singularity solutions. We identify a large class of interior perturbations leading to black hole formation. Moreover, we establish instability in all regularities below the threshold, consistent with the linear analysis in \cite{JS2}. Perhaps except for the threshold case, this instability is optimal. 

To make our statement more precise, given a $k$-self-similar naked singularity solution, we choose some outgoing null cone $C_{-1}^+$ of a point at the center $\Gamma$, parametrized by the area radius $r$, serving as the initial null cone. The initial data on $C_{-1}^+$ consists of $\alpha=\partial_r(r\phi)\big|_{C_{-1}^+}$ on the half real line.  In the $k$-self-similar naked singularity solutions, $\alpha$ is $C^{\frac{k^2}{1-k^2}}$ and $C^1$ except at one point.

\begin{remark}
 Christodoulou introduced the concept of BV solution in \cite{Chr93} as a suitable class of solution in which we can talk about the weak cosmic censorship. The term BV refers that $\alpha\in BV[0,+\infty)$.  We will not present the precise definition of BV solutions here because we only need $L^\infty$ estimates in this paper and don't essentially use specific properties of BV solutions. The only thing we need to know is that given any $\alpha_0=\partial_r(r\phi)\big|_{C^+_{-1}}\in BV[0,+\infty)$, we will have a unique BV solution and if the total variation of $\alpha_0$ is sufficiently small, we will have a global BV solution in the sense that it is geodesically complete.  One can also impose that the BV data is more regular, such as $\alpha\in AC$ (which is absolutely continuous) or $\alpha\in C^1$ (both with finite total variation). The BV development will have correspondingly higher regularity. Note that $AC$ is a closed subspace of $BV[0,+\infty)$ and $C^1$ is dense in $AC$ (in BV topology). Apparently the initial data of $k$-self-similar solutions lie in AC class. 
\end{remark}

\begin{remark}
 A $k$-self-similar solution is not asymptotically flat because of self-similarity (for example, $m\to\infty$ as $r\to\infty$). The naked singularity solution is obtained by cutting off the data on $C_{-1}^+$ in sufficiently far region. In the following a $k$-self-similar solution of naked singularity refers to the asymptotically flat solution obtained by cutting off. 
\end{remark}

\begin{theorem}[Interior instability of $k$-self-similar naked singularities below the threshold]\label{thm:interior}
Given any $k^2\in (0,\frac{1}{3})$, a $k$-self-similar solution of naked singularity with initial data $\alpha_0$ on $C_{-1}^+$ and $p\in(0,k^2)$, the solution is unstable to black hole formation under $C^\frac{p}{1-p}$ perturbations \emph{totally supported in the original interior region}.

Precisely speaking, there exists a family of initial data $\alpha_{0,t}$ for sufficiently small $|t|\ne0$, $\alpha_{0,t}=\alpha_0$ in the exterior (relative to the development of $\alpha_0$), the maximal development of $\alpha_{0,t}$ for $t\ne0$ has a closed trapped surface and a complete future null infinity, and $\alpha_{0,t}\to \alpha_0$ in $C^\frac{p}{1-p}[0,+\infty)$ as $t\to0$. 
\end{theorem}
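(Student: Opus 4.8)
The plan is to reduce everything to the formation of a closed trapped surface in the development of $\alpha_{0,t}$; completeness of $\mathcal{I}^+$ will then be a soft consequence, and the convergence $\alpha_{0,t}\to\alpha_0$ in $C^{p/(1-p)}$ will be arranged by construction. As in the exterior instability the driving mechanism is the infinite blue shift \eqref{blueshift}--\eqref{blueshift2} along $\mathcal{N}$, where $\mu\equiv\mu_*\in(0,1)$; what is new is that the perturbation may not touch $\mathcal{N}$, so this blue shift must be excited from inside the interior. For this I would foliate the interior by the incoming null cones $\Cb_a$, $0<a\le1$, the past null cones of the central points, normalized so that $\Cb_a=f_a(\Cb_1)$. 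As $a\to0$ the cones $\Cb_a$ become ``more and more singular'', asymptoting to $\mathcal{N}$ with blue shift diverging at the rate $\tfrac{\mu_*}{1-\mu_*}\log\tfrac1a$; equivalently, the self-similar solution in a neighbourhood of $\mathcal{O}$ is dictated by the behaviour of $\alpha_0$ as $r\to r_0^-$, and the family $\{\Cb_a\}$ resolves that neighbourhood scale by scale.

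First I would fix an exponent
$$q\in\Bigl(\tfrac{p}{1-p},\ \tfrac{k^2}{1-k^2}\Bigr),$$
which is a nonempty interval precisely because $p<k^2$, and set $\alpha_{0,t}=\alpha_0+|t|^{q}\chi\!\bigl(\tfrac{r_0-r}{|t|}\bigr)$ for a fixed smooth bump $\chi$ supported in $(\tfrac13,\tfrac23)$ (if convenient with $\int\chi=0$, so that even $\phi|_{C_{-1}^+}$ is unchanged in the exterior). For small $|t|$ the perturbation is supported in $(r_0-|t|,r_0)\subset[0,r_0)$, hence totally in the interior; it is smooth with the same singular point at $r_0$ as $\alpha_0$, its total variation is $\sim|t|^q\to0$, and $\|\alpha_{0,t}-\alpha_0\|_{C^{p/(1-p)}}\sim|t|^{\,q-p/(1-p)}\to0$. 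The two bounds on $q$ have a transparent meaning: the lower one is $C^{p/(1-p)}$-smallness, while the upper one says that \emph{at the scale $|t|$ near $r_0$} the perturbation dominates the self-similar variation of $\alpha_0$ there, which is $\sim|t|^{k^2/(1-k^2)}$. This scale-dependent dominance --- not any global comparison of regularities --- is what ``below the threshold'' means here, and it is manifestly impossible when $p\ge k^2$. The incoming cone through the point of $C_{-1}^+$ at distance $\sim|t|$ from $r_0$ is $\Cb_{a(t)}$ with $a(t)\to0$, and the trapped surface we will produce will sit on a cone $\Cb_{a_*(t)}$ with $a_*(t)\le a(t)$, $a_*(t)\to0$ as $t\to0$.

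The heart of the argument is the evolution of $\alpha_{0,t}$. On the regular part of the interior, where the background is $C^1$, a standard perturbative continuity argument applies. In the self-similar region near $\mathcal{O}$ one works along the family $\Cb_a$ as $a$ decreases from $a(t)$: using $\Cb_a=f_a(\Cb_1)$ to renormalize, I would run a bootstrap that passes from one dyadic scale $a$ to the next $a/2$ both the regularity of the solution (in particular $\mu<1$ together with the needed $L^\infty$ bounds) and a \emph{two-sided} control of the difference between the perturbed solution and the background; dominance forces this difference to be genuinely driven by $|t|^q\chi$, and each dyadic step amplifies it by the definite factor $2^{\mu_*/(1-\mu_*)}>1$ coming from the blue shift along the $\mathcal{N}$-hugging portion of $\Cb_a$. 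After $O\!\bigl(\log\tfrac1{|t|}\bigr)$ steps the amplified difference reaches the Christodoulou trapped-surface threshold --- the trapped-surface formation criterion of \cite{Chr91,Chr99}, a sufficiently concentrated incoming flux measured against the decrement of $r$ and against $1-\mu$ --- which forces a trapped surface on $\Cb_{a_*(t)}$. I expect this bootstrap to be the main obstacle: we are perturbing a solution that is itself singular at $\mathcal{O}$, so the estimates must be carried out directly in the near-singular self-similar region, with constants degenerating in $a$ in a controlled, self-similarly prescribed way, and one needs the \emph{lower} bound on the amplified difference --- not merely an upper bound --- in order to force $\mu=1$ rather than merely to bound it; this is where the borderline nature of ``$p<k^2$'' bites and where the explicit profiles of \cite{Chr94} must be used quantitatively.

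Finally, once such a trapped surface has formed, completeness of $\mathcal{I}^+$ for the maximal development of $\alpha_{0,t}$ follows as in \cite{Chr99} (cf.\ also \cite{Liu-Li}): the data at large $r$ is unchanged and the total variation of the perturbation is small, so outside a compact set the perturbed development is BV-close to the original asymptotically flat naked-singularity solution, whose $\mathcal{I}^+$ is complete; and the trapped surface, forming to the past of where the (perturbed) central singularity could occur, confines the singular region to the black hole interior. Thus the perturbed development has a closed trapped surface and a complete future null infinity, with $\alpha_{0,t}\to\alpha_0$ in $C^{p/(1-p)}[0,+\infty)$. Letting $p$ run over $(0,k^2)$ gives the instability in every regularity strictly below the threshold, and the failure of the construction at $p=k^2$ (the interval for $q$ becomes empty) is consistent with the stability predicted there by the linear analysis of \cite{JS2}.
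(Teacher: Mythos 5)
Your overall strategy is the paper's: a family of interior incoming cones hugging $\mathcal{N}$ whose blueshift diverges logarithmically, a shrinking bump on $C_{-1}^+$ approaching $r_0$ from inside, an amplitude window whose endpoints $\tfrac{p}{1-p}$ and $\tfrac{k^2}{1-k^2}$ encode $C^{p/(1-p)}$-smallness and scale-by-scale dominance over the background, and Dafermos-type completeness of $\mathcal{I}^+$ once a single trapped surface forms. The scaling heuristics (including the total incoming energy $\sim|t|^{2q+1}$ beating the threshold exactly when $q<\tfrac{k^2}{1-k^2}$) are sound. However, the entire quantitative core is missing: the ``bootstrap that passes from one dyadic scale to the next both the regularity and a two-sided control of the difference'' is precisely the theorem to be proved, not a step in its proof, and you correctly flag it as the main obstacle without supplying any of the estimates. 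The paper's substitute for this is a weighted a priori estimate on a single characteristic rectangle $[0,\ub_1]\times[u_0,u_1]$ in which the lapse bound $\Omega_0^2(u)\le|u|^{(1-\gamma)k^2}$ along the interior cone plays the role of your cumulative amplification, together with an almost-conservation law for $\int_0^{\ub_1}|rL\phi-\varphi|^2\,\D\ub$ in $u$ that delivers the needed \emph{lower} bound at $u=u_1$ and hence $h\le-1$ there. Without some version of these two ingredients the argument does not close.

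Two structural choices you make would also cause concrete trouble if you tried to execute the plan. First, you place the bump strictly inside the cone $\Cb_{a(t)}$ through $r_0-\tfrac13|t|$, so every cone $\Cb_a$ with $a\le a(t)$ on which you later want to work already carries \emph{perturbed, unknown} incoming data that must itself be bounded from below through the near-singular region; the paper avoids exactly this by anchoring each bump at the foot of the corresponding cone $C_s^-$ (support in $v\ge v_s$), so that $C_s^-$ and its past carry exact $k$-self-similar data and the problem reduces to one characteristic IVP with explicit incoming data. Second, your bump has width comparable to its distance $|t|$ from $r_0$, whereas the admissible outgoing extent for which the a priori estimates close at depth $|u_1|\sim|t|^{1/(1-k^2)}$ is $\ub_1\sim|u_1|^{1-p+(k^2-p)/2}\ll|t|$; with width $|t|$ the smallness condition $\ub_1|u_1|^{p-1}a\le\varepsilon$ fails badly (the exponent of $|t|$ is $\tfrac{p-k^2}{1-k^2}<0$), so the flux is spread over a region where the perturbative estimates cannot be run. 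The paper's cusp profile $t_s\ub^{p/(1-p)}$ on an interval of length $\ub_{1,s}\ll v_*-v_s$ is chosen exactly to concentrate the energy where the criterion is evaluated while remaining $C^{p/(1-p)}$-small. So the plan needs both a relocation of the support and a recalibration of the width before the trapped-surface step has a chance of being carried out.
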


The word ``original'' refers to the fact that the interior region will change after nonlinear interior perturbations constructed in Theorem \ref{thm:interior},  which is significantly different from exterior perturbations which will not change the interior and exterior. The new singularity appears before the original one. But one may expect that the interior and exterior will not change under high regularity interior perturbations (consistent with the linear analysis in \cite{JS2}).  

\begin{remark}
It will be clear in the proof that the perturbations look like a family of bumps. The size of them shrinks to zero, and at the same time they approach the intersection of $C^{+}_{-1}$ and the past null cone of the original singularity. The convergence is then also in AC, that is, for any $t$, $\alpha_{0,t}\in AC$ and $\alpha_{0,t}\to \alpha_0$ in BV topology (total variation).  We can further make a cut off to make $\alpha_{0,t}\in C^1$ and the convergence is in $C^{\frac{p}{1-p}}$ on the whole half-line and $C^1$ away from one point. This can be made precise using the function spaces as introduced in \cite{JS2}.
\end{remark}

\begin{remark} One may think that interior perturbations leading to black hole formation can be obtained by first doing exterior perturbations leading to black hole and then do interior perturbations so that the trapped surfaces are still around there by Cauchy stability. But the interior perturbations constructed in Theorem \ref{thm:interior} is totally supported in the original interior region, and moreover a closed trapped surface forms there. This kind of interior perturbations cannot be obtained by a perturbative argument. We give an essentially new way to construct interior perturbations.
 \end{remark}
 
 In the  course of the proof, we prove a trapped surface formation theorem which applies to both interior and exterior regions, so the nonlinear exterior instability in all regularities below the threshold can also be obtained, which was also a conjecture in view of the exterior linear analysis (see \cite{JS1}). The statement is as follows.
 \begin{theorem}[Exterior instability of $k$-self-similar naked singularities below the threshold]\label{thm:exterior}
For any $p\in(0,k^2)$, the $k$-self-similar solutions of naked singularities are unstable under $C^\frac{p}{1-p}$  exterior perturbations, so that a black hole forms preceding the singularity. 
\end{theorem}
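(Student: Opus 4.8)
The plan is to deduce Theorem \ref{thm:exterior} from the trapped surface formation theorem proved on the way to Theorem \ref{thm:interior}, by feeding it perturbations supported in the exterior of $C^+_{-1}$ instead of the interior. Concretely, let $r_0$ be the value of $r$ at $\mathcal{N}\cap C^+_{-1}$, and construct a one-parameter family $\alpha_{0,t}$, $|t|\ne0$ small, equal to $\alpha_0$ on $[0,+\infty)$ outside a short interval $I_t\subset(r_0,+\infty)$ which shrinks to $\{r_0\}$ from the exterior side as $t\to0$, with $\|\alpha_{0,t}-\alpha_0\|_{C^{p/(1-p)}[0,+\infty)}\to0$. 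Since the perturbation lives where $r>r_0$, finite speed of propagation leaves the interior region of the background, to the past of $\mathcal{N}$, unchanged, so the naked-singularity structure below $\mathcal{N}$ survives and any new trapped surface must appear in the exterior.

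Next I would run the self-similar/monotonicity argument of the interior analysis, now in the exterior. The incoming null cones emanating from points of $I_t$ sweep into the exterior region and, as $t\to0$, converge onto the past null cone $\mathcal{N}$, which carries the blue-shift divergence $\int_{\mathcal N} r\left(\frac{n\phi}{nr}\right)^2\D r=\infty$ (equivalently, the blue shift \eqref{blueshift} is infinite). Along such a near-$\mathcal{N}$ incoming cone the background null lapse degenerates at the self-similar rate set by $k$, so the self-similar weights amplify the contribution of the bump to the incoming scalar-field flux; together with the monotonicity of the Hawking mass along incoming null directions this pushes $\mu=\frac{2m}{r}$ past $1$ once the amplified flux is of order one. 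The bump is calibrated — $C^{p/(1-p)}$-size $\sim t$, width $\sim t$ — so that this amplified flux stays bounded below uniformly in $t$ precisely because $p<k^2$, producing a closed trapped surface, i.e. a black hole forming preceding $\mathcal{O}$. Unlike Christodoulou's original exterior instability, which exploits a single incoming cone near $\mathcal{N}$, the result here comes out of this whole family of cones becoming more and more singular.

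The hard part will be making this last step quantitative with constants uniform in $t$: one must propagate a priori estimates for the perturbed solution along the whole family of incoming cones as they approach the Cauchy horizon, using only the finite regularity $C^{k^2/(1-k^2)}$ that the background has along $\mathcal{N}$ (and $C^1$ away from $r_0$). This rules out high-regularity energy estimates and forces scale-invariant $L^\infty$-type norms adapted to the homothety $S=u\partial_u+r\partial_r$, with delicate control of the nonlinear error terms — in particular the back-reaction of the perturbed null lapse on $r$ and $m$ — so that no constant degenerates as the cone becomes singular. The threshold $p=k^2$ is sharp for this scheme: for $p\ge k^2$ the perturbation is too regular to overcome the amplification, matching the exterior stability predicted by the linear analysis in \cite{JS1,JS2}.

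Finally, once the trapped surface is in hand, I would conclude completeness of $\mathcal{I}^+$ for the perturbed development from the BV theory of this model: outside a fixed neighborhood of $r_0$ the data $\alpha_{0,t}$ differs from the cut-off, asymptotically flat background by arbitrarily small total variation, so the future development splits, as in Christodoulou's structure theory, into a regular region with complete future null infinity together with a trapped region whose boundary is achronal and reaches $\Gamma$. This yields a perturbed solution with a closed trapped surface and a complete $\mathcal{I}^+$, which is the content of Theorem \ref{thm:exterior}.
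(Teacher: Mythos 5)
Your setup of the perturbation (a bump on the initial outgoing cone supported in $r>r_0$, shrinking onto $\{r_0\}$, with the past of $\mathcal{N}$ untouched by finite speed of propagation) matches the paper's, but the mechanism you propose for trapped surface formation is not the one the paper uses, and you have the single-cone/family-of-cones dichotomy exactly backwards. The paper's exterior proof is a \emph{single-cone} argument: because the exterior perturbation does not affect $\mathcal{N}=\{s=s_*\}$, one may take $\Cb_0=\mathcal{N}$ itself as the incoming data cone, on which the background is known exactly ($\psi=k$, $\varphi=\Omega_0^2h_0/k$) and the blue shift gives $\Omega_0^2(u)\le|u|^{k^2}$ \emph{all the way down to} $u=0$; Theorem \ref{fots2} is then applied with $\widetilde{p}=k^2>p$ and the trapped sphere $S_{\ub_1,u_1}$ slides toward $\mathcal{O}$ as $u_1\to0$. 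The family of incoming cones becoming more and more singular is the device invented for the \emph{interior} problem, where $\mathcal{N}$ is destroyed by the perturbation and each available cone has a regular center. Your proposed family of exterior cones through points of $I_t$ terminates on the Cauchy horizon, so running the interior scheme there would require uniform control of the background (the bounds $0\le h_0\le 1$, $|\psi|\le 1$, $\Omega_0^2\le|u|^p$ down to small $r$) along cones approaching $\mathcal{CH}^+$ — a harder and entirely unnecessary detour, since $\mathcal{N}$ is sitting there unperturbed.

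The more serious gap is that the quantitative heart of the proof is absent. The statement "the amplified flux stays bounded below uniformly in $t$ precisely because $p<k^2$" with a bump of "size $\sim t$, width $\sim t$" is a heuristic, not an argument, and the calibration is wrong: in the actual proof the width $\ub_1$ and amplitude $t$ are tied to the location $u_1$ of the prospective trapped sphere through \eqref{deltau1} and \eqref{deft}, yielding $t^2\gtrsim|u_1|^{\frac{1}{2}(\widetilde{p}-p)\frac{1-3p}{1-p}}$, and it is this exponent being positive (i.e.\ $p<\widetilde{p}=k^2$ together with $p<\frac13$) that makes $t\to0$. Establishing that requires the a priori estimates of Theorem \ref{estimate} with the weight $|u|^p$ and the comparison of the incoming flux of $|rL\phi-\varphi|^2$ between $C_{u_0}$ and $C_{u_1}$ in Theorem \ref{fots}, none of which your "monotonicity of the Hawking mass" sketch supplies. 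What you defer as "the hard part" is, in fact, the entire proof; the surrounding narrative (denseness of the perturbation, completeness of $\mathcal{I}^+$ from a single trapped surface via the BV theory) is the easy part and is handled the same way in the paper.
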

\begin{remark}
We can show the exterior H\"older instability for $\frac{p}{1-p}<\frac{2k^2}{2k^2+3}$ (smaller than $\frac{k^2}{1-k^2}$), using the argument in the last section of \cite{L-Z3}, in which we showed instability for naked singularity solutions of spherically symmetric Einstein--Euler system constructed in \cite{O-P, G-H-J} under H\"older  perturbations of an external scalar field.
\end{remark}

 \subsection{The proof strategy} As mentioned above, the instability argument in \cite{Chr99} relies on the infiniteness of \eqref{blueshift}, or \eqref{blueshift2} in \cite{Liu-Li, Li-Liu1, A}, along the past null cone of the singularity. The advantage of exterior perturbations is that they will not affect the singularity, and the main difficulty in constructing interior perturbations is that interior perturbations may change the location of the singularity and one doesn't know a priori what would happen after perturbations. 
 
For simplicity, in constructing interior perturbations, one would first consider perturbations supported away from the center. In this case, we can still have an incoming cone (say $\mathcal{N}_1$, see Figure \ref{fig:nakedinterior}), intersecting the initial outgoing cone at the ``starting point'' of the perturbation, with the property that its past will not affected under such a perturbation. This problem can then be still viewed as an exterior problem to $\mathcal{N}_1$. Although one would have trapped surface formation if the perturbation is large, when such a perturbation becomes smaller, it cannot be expected that trapped surface will form, because the center of $\mathcal{N}_1$ is regular and there is no infinite blueshift helping the trapped surface formation. 

Nevertheless, one would expect that a ``finite'' blueshift can still contribute to trapped surface formation, which can be seen implicitly in the our previous work \cite{Li-Liu1} (Case 2 of Theorem 6.1), in which most theorems are applicable to the most general case, not only  that \eqref{blueshift2} is infinite. Our main new strategy is to push $\mathcal{N}_1$ to $\mathcal{N}$ (see Figure \ref{fig:nakedinterior})  and establish trapped surface formation theorem for each $\mathcal{N}_1$ before $\mathcal{N}$ in a very precise way. Observing that the blueshift tends to infinity, the perturbations leading to trapped surface formation can be made arbitrarily small, with the ``starting points'' of the perturbations approaching $\mathcal{N}$. Once the trapped surface formation has been established, the black hole will eventually form,  as showed by Dafermos \cite{D}, saying that even  a single trapped surface is enough to guarantee that the maximal solution has a complete future null infinity and an event horizon,  in many spherically symmetric models including Einstein--scalar field. Moreover, for $k$-self-similar naked singularities, the rate of \eqref{blueshift2} becoming infinity is particularly clear, so we are able to show interior instability in all regularities below the threshold by refining the original arguments in \cite{Liu-Li, L-Z3}.

 \begin{figure}
\includegraphics[width=3 in]{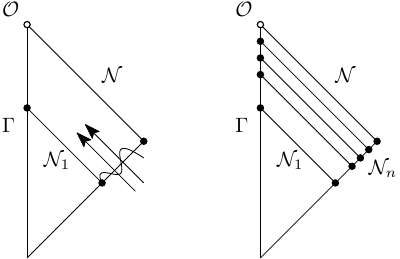}
\caption{Interior perturbations}
\label{fig:nakedinterior}
\end{figure}

 \subsection{The main results for general naked singularities and their implications} An idea immediately coming to mind is to generalize the above strategy to general naked singularities, with the property that the integrals (blueshifts) \eqref{blueshift} or \eqref{blueshift2} along a family of incoming null cones $\mathcal{N}_n$ tends to infinity when $\mathcal{N}_n$ approach the past null cone $\mathcal{N}$ of the singularity $\mathcal{O}$.  The precise setup is the following, where we use the notations in \cite{Chr93}.  Suppose that we have a spherically symmetric BV solution of the Einstein--scalar field system  and the quotient manifold $\mathcal{Q}$ is equipped with double null coordinates $(u,v)$ so that $u$ is constant along outgoing null cones from a point at the center $\Gamma$, and $v$ is constant along conjugate incoming null cones.  
 Suppose that the center $\Gamma$ is given by $u=v$ and the ``singularity'' $\mathcal{O}$ corresponds to $u=v=0$. The past null cone of $\mathcal{O}$ would then be $v=0$. We only consider the part of the future of some outgoing cone $C_{-1}^+: u=-1$, on which $v$ is chosen to be $\frac{1}{2}r$ and we assume the part $(u,u), u\in[-1,0)$ of the center is regular (in the sense that $\mu\to0$ along each $C^-_v$ for $v<0$). The point $\mathcal{O}$ is supposed to be singular in the following sense:  there exists a sequence of $v_n$ approaching $0$ monotonically, with the integrals
 \begin{equation}\label{blueshfit4}\int_{\mathcal{N}_n}r\left(\frac{n\phi}{nr}\right)^2\D r\to\infty, n\to\infty,\end{equation}
where $\mathcal{N}_n$ is simply the level set $v=v_n$ (see Figure \ref{fig:nakedinterior} again). $v_n$ approaching $0$ means that $\mathcal{N}_n$ approaches $\mathcal{N}$, the past null cone of $\mathcal{O}$. Then we will prove that such singular spherically symmetric solutions are unstable to black hole formation under BV perturbations from the initial data on $C_{-1}^+$. However, note that a singular solution in the above sense need not to be a naked singularity solution,  which is similar to Christodoulou's proof of instability in \cite{Chr99}, where he only assumed that \eqref{blueshift} is infinite on the past null cone $\mathcal{N}$ of $\mathcal{O}$ and whether the exterior extension is a naked singularity is not important.  The precise statement of the above instability is the following.
 \begin{theorem}[Interior BV instability of general naked singularities]\label{thm:BVinterior}
 Suppose that the initial data  $\alpha_0$ given on $C_{-1}^+$ leads to a singular solution in the above sense. Then the maximal future development has a closed trapped surface and a complete future null infinity, or there exists a sequence of initial data $\alpha_{0,n}, n\in\mathbb{N}$ that differ from $\alpha_0$ in the interior region (relative to the development of $\alpha_0$, that is, $v<0$),  so that the maximal future development of $\alpha_{0,n}$ has a closed trapped surface and a complete future null infinity, and $\alpha_{0,n}-\alpha_0$ is absolutely continuous and tends to zero as $n\to\infty$ in $BV[0,+\infty)$.
 \end{theorem}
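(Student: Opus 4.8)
The plan is to reduce Theorem \ref{thm:BVinterior} to a single, carefully quantified trapped-surface-formation statement applied along the incoming cones $\mathcal{N}_n$, exactly as in the strategy described above for the $k$-self-similar case, but now without self-similarity as a crutch. First I would fix the notation: for each $n$, $\mathcal{N}_n = \{v = v_n\}$ is an incoming cone whose past is unaffected by any perturbation of $\alpha_0$ supported in $v < 0$ and to the future of (a neighborhood of) the point where $\mathcal{N}_n$ meets $C_{-1}^+$. So the perturbed data restricted to the past of $\mathcal{N}_n$ agrees with the unperturbed data, and the perturbation can be realized as a localized bump on $C_{-1}^+$ placed near the intersection point $p_n = \mathcal{N}_n \cap C_{-1}^+$, with $p_n \to \mathcal{N} \cap C_{-1}^+$ as $n \to \infty$. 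The problem then becomes: given the induced data on $\mathcal{N}_n$ (which carries the ``finite blueshift'' $I_n := \int_{\mathcal{N}_n} r (n\phi/nr)^2 \,\D r$, with $I_n \to \infty$), and given freedom to add a small outgoing bump, produce a trapped surface to the future of $\mathcal{N}_n$ with the total variation of the added bump controlled.

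Second, I would invoke (a version of) the trapped-surface-formation theorem that the paper promises to prove in the course of establishing Theorem \ref{thm:interior} — the one that ``applies to both interior and exterior regions.'' The key point is that this theorem should say: if on an incoming cone $\mathcal{N}$ with center regular the quantity $\int_{\mathcal{N}} r(n\phi/nr)^2 \D r$ is at least $M$, then an outgoing perturbation of total variation $\lesssim M^{-1/2}$ (or some explicit negative power) suffices to create a trapped surface in the domain of dependence. Applying this with $\mathcal{N} = \mathcal{N}_n$ and $M = c\, I_n$ for a fixed fraction $c$, one gets perturbations $\alpha_{0,n}$ with total variation of $\alpha_{0,n} - \alpha_0$ bounded by $\lesssim I_n^{-1/2} \to 0$. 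Each $\alpha_{0,n}$ differs from $\alpha_0$ only in $v < 0$ (the bump is localized near $p_n$, strictly in the interior), each is absolutely continuous (a smooth bump), and the maximal future development of $\alpha_{0,n}$ contains a closed trapped surface; Dafermos's theorem \cite{D} then upgrades this to a complete future null infinity and an event horizon. The dichotomy in the statement (``either the maximal development of $\alpha_0$ itself already has a trapped surface, or ...'') is there precisely to handle the degenerate alternative in which the unperturbed singular solution is already a black-hole spacetime; in that case $n = 0$ works with no perturbation.

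Third, the technical heart — and the main obstacle — is that the induced data on $\mathcal{N}_n$ must be controlled \emph{uniformly in $n$} in order for the trapped-surface-formation theorem to be applied with constants independent of $n$. Unlike the $k$-self-similar case, where self-similarity makes the geometry along $\mathcal{N}_n$ rescale in an explicit way, here one only has the qualitative hypotheses: the center $(u,u)$, $u \in [-1,0)$ is regular ($\mu \to 0$ along each $C_v^-$, $v < 0$), and $I_n \to \infty$. I would need to extract from the BV structure of the solution on $v < 0$ the relevant uniform bounds on $r$, on $\mu = 2m/r$, on the Hawking mass, and on the null lapse along $\mathcal{N}_n$ — essentially showing that as $n \to \infty$ the data on $\mathcal{N}_n$ converges (after rescaling by the shrinking $v_n$) to the data along $\mathcal{N}$, which is controlled by the fact that $\alpha_0$ is of bounded variation. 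This is where Christodoulou's BV machinery \cite{Chr93} and the estimates from \cite{Liu-Li} enter: one uses that along $\mathcal{N}$ itself the solution is a genuine BV solution up to (but not including) $\mathcal{O}$, with $\mu$ bounded away from the values that obstruct the argument, so for $v_n$ small enough $\mathcal{N}_n$ inherits comparable bounds. Once this uniform control is in hand, the trapped-surface-formation theorem applies with $n$-independent constants, the total variation bound $\lesssim I_n^{-1/2}$ follows, and the theorem is proved. I expect the bookkeeping of these uniform estimates along the degenerating family $\mathcal{N}_n$, together with the precise placement of the bump so that it is genuinely supported in $v<0$ while its domain of dependence reaches the region where the trapped surface forms, to be the most delicate part of the argument.
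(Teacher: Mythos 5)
Your high-level strategy --- treat each $\mathcal{N}_n$ as the initial incoming cone of a characteristic problem, add a small bump on $C_{-1}^+$ starting at $p_n$, invoke a quantitative trapped-surface-formation criterion, and finish with Dafermos's theorem --- is indeed the paper's strategy, and your reading of the dichotomy in the statement is correct. But the step you yourself flag as ``the technical heart'' is resolved the wrong way, and the route you propose would not go through. You want to apply the trapped-surface theorem proved for Theorem \ref{thm:interior} with $n$-independent constants by showing that the data induced on $\mathcal{N}_n$ is uniformly controlled, e.g.\ that after rescaling it converges to the data on $\mathcal{N}$. Under the hypotheses of Theorem \ref{thm:BVinterior} nothing of the sort is available: the only assumptions are regularity of the center and divergence of the blueshift integrals $I_n$; nothing at all is assumed about the limit cone $\mathcal{N}$, and for a general BV solution the quantities $\varphi_n=rL\phi\big|_{\mathcal{N}_n}$ and $\psi_n=r\Lb\phi\big|_{\mathcal{N}_n}$ need not be uniformly bounded in $n$. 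The hypotheses \eqref{psivarphibound} of Theorem \ref{fots} are genuinely tied to self-similarity and cannot be recovered here. The paper instead handles this by a dichotomy: if $|\varphi_n|$ is unbounded along a subsequence, the unperturbed solution already develops a trapped surface (Case 1, feeding the first alternative of the theorem); if $\varphi_n$ is uniformly bounded, a possibly unbounded $\psi_n$ is absorbed through the weight $w$ of \eqref{defw} together with the bound $w^2\le|\log\Omega_0|$, and only then does one perturb. No uniform convergence of the data on $\mathcal{N}_n$ is needed or used.

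A second, related discrepancy concerns the source of smallness. You posit a bound of the form ``total variation $\lesssim I_n^{-1/2}$''. In the paper the smallness comes from the null lapse: by \eqref{Omega_0}, $I_n\to\infty$ forces $\Omega_{0,n}^2(u_{1,n})\to0$ for suitable $u_{1,n}\to0$, and the added bump is linear in $\ub_n$ with total variation of order $\Omega_{0,n}^{2/3-2\gamma}(u_{1,n})$, i.e.\ exponentially small in the partial blueshift integral rather than polynomially small in $I_n$. Since the theorem only asserts convergence to zero the rate is immaterial, but the mechanism matters: whether a perturbation is needed at all is decided by a second dichotomy inside Case 2, namely whether the lower bound \eqref{lowerboundgeneral} already holds for the unperturbed data. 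As written, your proposal postulates the key quantitative lemma rather than proving it, and the argument you sketch for supplying its hypotheses is precisely the step that fails in the general BV setting.
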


The basic strategy of the proof is similar to the proof for $k$-self-similar solutions. The main difference is that in general cases we do not have precise information about the solution other than  \eqref{blueshfit4}, so we should establish the trapped surface formation theorem including all possible cases. For the same reason,  we cannot expect the perturbations can be made small in a more regular norm than BV, and we will not pursuit that the perturbations are supported totally in the original interior region.

 Interestingly, the above theorem provides a new insight into Christodoulou's program in proving the weak cosmic censorship for Einstein--scalar field system. As in the case of a single null cone, where the infiniteness of  \eqref{blueshift} and \eqref{blueshift2} are related, the condition \eqref{blueshfit4}, based on which Theorem \ref{thm:BVinterior} is established,  is also a consequence of 
  \begin{equation}\label{blueshift3}\int_{\mathcal{N}_n}\frac{1}{r}\frac{\mu}{1-\mu}\D r\to\infty, n\to\infty.\end{equation}
 This can be deduced in a similar way as we did in \cite{Liu-Li} on a single null cone. In fact, when the center is regular ($\mu\to0$ as approaching the center along each $\mathcal{N}_n$), the infiniteness  of both integrals  are equivalent (see Remark \ref{blueshiftequal} below). Now, if we assume the infiniteness of \eqref{blueshift2} (or \eqref{blueshift}) on $\mathcal{N}$ as in the works \cite{Chr99, Liu-Li, Li-Liu1}, then by Fatou's Lemma,   \eqref{blueshfit4} (or \eqref{blueshift3}) holds for any sequences $\mathcal{N}_n$ approaching $\mathcal{N}$. Thus Theorem \ref{thm:BVinterior} is applicable, and moreover the perturbed family can be made to a one-parameter instead of a discrete family.

 Beside this, there is a more direct way to justify our assumption \eqref{blueshfit4} (or \eqref{blueshift3})) in applying Theorem \ref{thm:BVinterior}.  In fact, the condition \eqref{blueshift3} can directly be proven to be a blow up criterion of $C^1$ spherically symmetric solutions\footnote{$C^1$ solutions roughly mean that the wave function $\phi$ and the area radius $r$ are $C^2$ functions in $(u,v)$ including the center, and certain regularity conditions hold on the center. It arises from $C^1$ initial data. See also \cite{Chr93}.} of the Einstein--scalar field system. Its proof can be found in the Appendix B of \cite{A-T}, in which An-Tan studied the weak cosmic censorship conjecture of spherically symmetric Einstein--Maxwell--(complex) scalar field system, and they introduced (a charged version of) this integral to establish an extension principle. In the Einstein--scalar field case, roughly speaking, if the $C^1$ solution exists in $v<0$ and the integrals
 \begin{equation}\label{blueshiftbounded}\int\frac{1}{r}\frac{\mu}{1-\mu}\D r\end{equation}
over the incoming null cone $v=v_0$ are uniformly bounded for all $v_0<0$, then the $C^1$ solution extends across $\mathcal{O}$ and then $\mathcal{O}$ is not a singularity. Christodoulou's extension principle in \cite{Chr93} is that $\mu$ is sufficiently small when approaching $\mathcal{O}$ from its causal past.  In fact, by changing the variable $r$ to $u$, the integrant $\frac{-\partial_ur}{r}\frac{\mu}{1-\mu}$ appears as the coefficient of the linear term of a differential equation for second order transversal derivatives of $\phi$. Either $\mu$ being small or the integral \eqref{blueshiftbounded} being uniformly bounded is sufficient for absorbing this linear term.

 Theorem \ref{thm:BVinterior} can then be applied to show \emph{instability of AC naked singularity solutions} without utilizing exterior instability. The argument is as follows:  Let $\mathcal{A}$ and $C^1$ be space of all AC and $C^1$ initial data with bounded total variation, $\mathcal{R}\subset\mathcal{A}$ be the space of data leading to complete maximal future developments, and $\mathcal{S}=\mathcal{A}-\mathcal{R}$. Let $\mathcal{G}\subset\mathcal{S}$ be the space of initial data leading to maximal developments possessing a closed trapped surface and a complete future null infinity, and $\mathcal{E}=\mathcal{S}-\mathcal{G}$.  By definition, $\mathcal{E}$ includes data leading to naked singularity solutions. Choose $\alpha_0\in\mathcal{E}$, note that $C^1$ is dense in $\mathcal{A}$, we can choose $\alpha_{0,m}\in C^1$ so that $\alpha_{0,m}\to\alpha_0$ in AC. If (by possibly passing to a subsequence)  $\alpha_{0,m}\in C^1\cap \mathcal{S}$, then the blow up criterion \eqref{blueshift3} holds for every $m$ before their respective first singularities. By Theorem \ref{thm:BVinterior}, we can perturb each $\alpha_{0,m}$ a little bit so that  $\alpha_{0,m}\in \mathcal{G}$ and still converges to $\alpha_0$ in AC.  It could happen that such $\alpha_{0,m}\in C^1$ are all in $\mathcal{R}$, that is, lead to complete solution. In any case, we have shown the instability:  $\mathcal{A}-\mathcal{E}=\mathcal{R}\cup\mathcal{G}$ is dense in $\mathcal{A}$.

\begin{remark}
The denseness of $\mathcal{A}-\mathcal{E}$ alone does not mean it is generic in $\mathcal{A}$. In Christodoulou's work \cite{Chr99}, the exterior perturbations constructed there have the form of a line $\alpha_0+tf, t\in\mathbb{R}$ for some $f$, and such different   lines do not intersect.  This means, in addition to the denseness of $\mathcal{A}-\mathcal{E}$,  that $\mathcal{E}$ has codimension at least $1$ in $\mathcal{A}$, which is the genericity in the statement of the weak cosmic censorship conjecture. But based on \eqref{blueshift3}, we can only construct interior perturbations for discrete $n$. Moreover, we also don't know whether different families of perturbations intersect. Therefore, if we bring back the exterior instability, it is not clear whether we can have higher codimension instability by adding interior perturbations. 

Nevertheless, note that $\mathcal{A}$ is a complete metric space, one can still draw conclusion on a weaker genericity in the sense of Baire category\footnote{The genericity established for non-spherically symmetric gravitational perturbations in \cite{Li-Liu1} is also in the sense of Baire category.} by Cauchy stability in BV topology. First, Cauchy stability implies that $\mathcal{G}$ is open in $\mathcal{A}$ (the existence of a closed trapped surface is an open condition, and \cite{D} applies to conclude the existence of the complete future null infinity). Second, let $\mathcal{R}_T\subset\mathcal{A}$ be the data leading to a maximal future development with the property that the first singularity does not appear before proper time $T>0$ measured from the initial vertex. Then $\mathcal{R}=\displaystyle\bigcap_{N=1}^\infty \mathcal{R}_N$ and Cauchy stability implies that $\mathcal{R}_N$ is open for each $N$. Therefore
$$\mathcal{A}-\mathcal{E}=\mathcal{R}\cup\mathcal{G}=\bigcap_{N=1}^\infty(\mathcal{R}_N\cup\mathcal{G})$$
is a dense set and a countably intersection of open sets, and hence a complement of a first category set, which can be viewed as a generic set.

\end{remark}

\subsection*{Acknowledgement}

 The author would like to thank Tingting Li for carefully reading the manuscript. This work is supported by National Key R\&D Program of China (No. 2022YFA1005400) and NSFC (12326602, 12141106).

\section{The apriori estimates}\label{section:estimate} 

In this section we prove an a priori estimates, based on which all trapped surface formation theorems are derived.  It is a refinement of the a priori estimates derived in our previous works \cite{Liu-Li}.

\subsection{Double null coordinates and equations}

We will follow closely the notations in \cite{Liu-Li}. We use double null coordinates $(\ub,u)$, where $\ub$, $u$ are optical functions, with their level sets $\Cb_{\ub}$ and $C_u$ being incoming and outgoing null cones invariant under the $SO(3)$ action respectively. In the quotient spacetime, $\Cb_{\ub}$ and $C_u$ are simply incoming and outgoing null rays. We denote
\begin{align*}
L=\frac{\partial}{\partial \ub},\ \Lb=\frac{\partial}{\partial u},
\end{align*}
and  the lapse function $\Omega$ by
\begin{align*}
-2\Omega^2=g(L,\Lb).
\end{align*}
The metric then has the form
\begin{align*}
-2\Omega^2(\D\ub\otimes\D u+\D u\otimes\D\ub)+r^2\D\sigma_{\mathbf{S}^2}
\end{align*}
where the area radius function $r=r(\ub,u)$ is defined by
\begin{align*}
\text{Area}(S_{\ub,u})=4\pi r^2,
\end{align*}
and $\D\sigma_{\mathbf{S}^2}$ is the standard metric of the unit sphere.

 The unknowns of the Einstein-scalar field equations are then $r$, $\Omega$ and the scalar field function $\phi$. We define the null expansions relative to the normalized pair of null vectors $\Omega^{-2}L$, $\Lb$ and the mass function $m$ by
\begin{align*}
h=\Omega^{-2}D r,\ \hb=\Db r,\ m=\frac{r}{2}(1+h\hb),
\end{align*}
where $D$ and $\Db$ are the restrictions on the orbit spheres of the Lie derivatives along $L$ and $\Lb$. In our case, $D$ and $\Db$ are simply the ordinary derivatives $\partial_{\ub}$ and $\partial_u$. The $D$ and $\Db$ derivative of the lapse $\Omega$\begin{align*}
 \omega=D\log\Omega, \ \omegab=\Db\log\Omega.
\end{align*}
But $\omegab$   is not needed in this paper. Finally, we denote
\begin{align*}
L\phi=\frac{\partial}{\partial\ub}\phi,\ \Lb\phi=\frac{\partial}{\partial u}\phi.
\end{align*}

The Einstein--scalar field system \eqref{ES} implies that following null structure equations (also see \cite{Liu-Li}):
\begin{align}
\label{Dh}Dh=&-r\Omega^{-2}(L\phi)^2,\\
\label{Dbh}\Db(\Omega^2h)=&-\frac{\Omega^2(1+h\hb)}{r},\\
\label{Dhb}D\hb=&-\frac{\Omega^2(1+h\hb)}{r},\\
\label{Dbhb}\Db(\Omega^{-2}\hb)=&-r\Omega^{-2}(\Lb\phi)^2,\\
\label{Dbomega}\Db\omega=&\frac{\Omega^2(1+h\hb)}{r^2}-L\phi\Lb\phi.
\end{align}
And the wave equation reads in the double null coordinates:
\begin{align}
\label{DbLphi}\Db(rL\phi)=&-\Omega^2h\Lb\phi,\\
\label{DLbphi}D(r\Lb\phi)=&-\hb L\phi.
\end{align}
These two equations are in fact the same equation. 

\subsection{The a priori estimates}

 We consider a double null initial value problem with initial data given on $\Cb_0$ where $\ub=0$ and $C_{u_0}$ where $u=u_0$ for some $u_0<0$.  On $\Cb_0$, we have a freedom to choose the function $u$, and we set $u=-r$. We denote the restrictions on $\Cb_0$ of some geometric quantities and derivatives of $\phi$, which are considered as functions of $u$:
$$\psi=\psi(u)=r\Lb\phi\Big|_{\Cb_0},\ \varphi=\varphi(u)=rL\phi\Big|_{\Cb_0},\ \Omega_0=\Omega_0(u)=\Omega\Big|_{\Cb_0},\ h_0=h_0(u)=h\big|_{\Cb_0}.$$
Since $u=-r$ on $\Cb_0$, we must have $\hb\big|_{\Cb_0}\equiv-1$.

We also have a freedom to choose $\Omega$ along $C_{u_0}$, that is, the freedom of choosing $\ub$ on $C_{u_0}$. We leave this freedom by only setting $\Omega(0,u_0)\le 1$ for convenience. On $\Cb_0$, equation \eqref{Dbhb} reads 
\begin{align*}
\frac{\partial}{\partial u}\log\Omega_0=-\frac{1}{2}\frac{\psi^2}{|u|},
\end{align*} 
and hence
\begin{align}\label{Omega_0}
 \log\frac{\Omega_0^2(u)}{\Omega_0^2(u_0)}=-\int_{u_0}^u\frac{\psi^2(u')}{|u'|}\D u'.
 \end{align}
 Therefore, $\Omega$ along $\Cb_0$ is determined. In particular, $\Omega_0$ is decreasing, and we have $\Omega_0(u)\le1$ for all $u\in[u_0,0)$. 
 
 \begin{remark}\label{blueshiftequal}
We will show that the conditions \eqref{blueshfit4} and \eqref{blueshift3} are equivalent assuming the center is regular. The above mentioned double null coordinates can be constructed for $\Cb_0=\mathcal{N}_n$ for each $n$.  From \eqref{Dbh}, we have
\begin{align*}
\frac{\partial}{\partial u}(\Omega_0^2h_0)=-\frac{\Omega_0^2(1-h_0)}{|u|},
\end{align*}
and hence
\begin{align}\label{Omega_0^2h_0}
 -\log\frac{\Omega_0^2(u)h_0(u)}{\Omega_0^2(u_0)h_0(u_0)}=\int_{u_0}^u\frac{1}{|u'|}\left(\frac{1}{h_0(u')}-1\right)\D u'=\int_{u_0}^u\frac{1}{|u'|}\frac{\mu(u')}{1-\mu(u')}\D u'.
\end{align}
Comparing \eqref{Omega_0} and  \eqref{Omega_0^2h_0}, we have
$$\int_{u_0}^u\frac{1}{|u'|}\frac{\mu(u')}{1-\mu(u')}\D u'=\int_{u_0}^u\frac{\psi^2(u')}{|u'|}\D u'-\log\frac{h_0(u)}{h_0(u_0)}.$$
Since $\mu_0=1-h_0$, regularity on the center means that $h_0(u)\to1$ as $u\to0$, and hence \eqref{blueshfit4} and \eqref{blueshift3} are equivalent.
\end{remark}

In $k$-self-similar cases, we need to utilize the upper bound of $\Omega_0$ to do more precise estimates. Let $u_1\in (u_0,0)$ and $p>0$ be such that  for any $u\in [u_0,u_1]$,
\begin{equation*} 
 \frac{\Omega_0^2(u)}{\Omega_0^2(u_0)}\le\left(\frac{|u|}{|u_0|}\right)^{p}.
\end{equation*}
Because we have set $\Omega_0(u_0)\le 1$ and for simplicity we assume $u_0\le -1$, then this condition becomes
\begin{equation}\label{Omegaupper}
\Omega_0^2(u)\le|u|^{p}.
\end{equation}

\begin{remark}The above assumptions do not imply that the vertex of $\Cb_0$ is singular since we do not assume anything for $u\in(u_1,0)$.
 On the other hand,   the past null cone of the singularity in the $k$-self-similar solutions (see Section \ref{section:kselfsimilar}) do verify the above assumptions for $p=k^2$ and arbitrarily $u_1\in (u_0,0)$. 
 \end{remark}
 To each $p$, we define a positive function $w=w_p(u)$ such that
\begin{equation}\label{defw} w_p(u)=\max\left\{1, |u|^{1-2p}\int_{u_0}^{u}|u'|^{2p-2}|\psi|\D u'\right\}\end{equation}
which is used to estimate $\psi$. We also define $w_1=\displaystyle\sup_{u_0\le u\le u_1}w(u)$. It will be clear later that in $k$-self-similar cases $\psi$ is bounded by a constant and so does $w_1$. This function is introduced to handle general cases.

  We will prove the following a priori estimates.
\begin{theorem}\label{estimate}
Suppose that $\ub_1>0, u_0, u_1, p$ with $ u_0\le -1$, $u_0\le u_1<0$ and $p\in [0,\frac{1}{3})$ are chosen and \eqref{Omegaupper} is satisfied on $\Cb_0$. We also assume that $0\le h_0\le 1$ on $\Cb_0$\footnote{This condition means that there are no trapped surfaces are on $\Cb_0$ and the mass function $m$ is nonnegative.}. Then there exists a universal\footnote{The a priori estimates can be easily generalized to $p\in[0,\frac{1}{2})$ and $\varepsilon$ depends on the gap between $p$ and $\frac{1}{2}$.} $\varepsilon>0$ such that the following statement is true. Let $a=a(\ub_1, u_0,u_1)$ be satisfied
\begin{align}\label{defa}
\max\left\{1,\sup_{0\le\ub\le\ub_1}\left( \ub^{-\frac{p}{1-p}}|rL\phi(\ub,u_0)-\varphi(u_0)|,|u_0|^{1-p}|\omega(\ub,u_0)|\right) ,\sup_{u_0\le u\le u_1}|u|^{-p}|\varphi(u)|\right\}\le a
\end{align}
Then in the region $(\ub,u)\in[0,\ub_1]\times[u_0,u_1]$ where
\begin{align}\label{smallness}
\ub_1|u_1|^{p-1}w_1^2a\le\varepsilon
\end{align}
we have the following estimates:
\begin{align}
\label{estimate-bigomega}\frac{1}{2}\Omega_0\le \Omega\le &2\Omega_0,\\
\label{estimate-r} \frac{1}{2}|u|\le r\le& 2|u|,
\end{align}
and\footnote{The notation $A\lesssim B$ means $A\le cB$ for some  universal constant $c$.}
\begin{align}
\label{estimate-Lphi}|rL\phi|\lesssim& |u|^{p}a,\\
\label{estimate-Lbphit}|r\Lb\phi-\psi|\lesssim&\ub|u|^{p-1}a,\\
\label{estimate-ht}|h-h_0|\lesssim&\ub|u|^{p-1}(|u|^{-p}\Omega_0^{2})^{-1}a^2,\\
\label{estimate-hbt}|\hb+1|\lesssim&\ub|u|^{p-1}a,\\
\label{estimate-omega}|u||\omega|\lesssim&|u|^{p}wa.
\end{align}
\end{theorem}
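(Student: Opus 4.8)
The plan is to prove Theorem \ref{estimate} by a continuity/bootstrap argument on the parameter $\ub$, running the estimates in the order dictated by the structure equations. I would first fix the region $[0,\ub_1]\times[u_0,u_1]$ and assume, for some $\ub_*\le\ub_1$, that the estimates \eqref{estimate-bigomega}--\eqref{estimate-omega} hold on $[0,\ub_*]\times[u_0,u_1]$ with the universal constants in the $\lesssim$'s replaced by twice their final value (the standard bootstrap inflation). The set of such $\ub_*$ is closed and nonempty (it contains $0$, where the initial data give everything); the main work is to show it is open, i.e. that under the smallness \eqref{smallness} with $\varepsilon$ small, the inflated estimates self-improve. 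Since all quantities are continuous in $\ub$, improving the constants on $[0,\ub_*]$ lets us extend slightly, closing the argument.

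The order of the estimates matters and mirrors the hierarchy of the equations. First, from \eqref{DbLphi}, $\Db(rL\phi)=-\Omega^2h\Lb\phi$; integrating in $u$ from $u_0$ and using the bootstrap bounds on $\Omega\approx\Omega_0\le|u|^{p/2}$, $h\le 1$, $r\approx|u|$, and the (to-be-established, but circularly controlled) bound on $\Lb\phi$, together with the initial datum control $\ub^{-p/(1-p)}|rL\phi(\ub,u_0)-\varphi(u_0)|\le a$ and $|\varphi(u)|\le|u|^p a$ on $C_{u_0}$, gives \eqref{estimate-Lphi}: $|rL\phi|\lesssim|u|^p a$. The exponent $\frac{p}{1-p}$ on the initial-cone error is exactly what is needed so the contribution from $C_{u_0}$ matches $|u|^p a$ after the self-similar scaling. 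Next, from \eqref{DLbphi} $D(r\Lb\phi)=-\hb L\phi$, so $\Db\phi$-type quantity: integrating in $\ub$ from $0$ and using $|\hb|\approx 1$ and the just-proven bound on $L\phi$ gives \eqref{estimate-Lbphit}, $|r\Lb\phi-\psi|\lesssim\ub|u|^{p-1}a$; this is where the weight $w_1$ and the definition \eqref{defw} of $w_p$ enter, controlling $\psi$ itself in general (non-$k$-self-similar) cases, though in the $k$-self-similar case $\psi$ and $w_1$ are simply bounded. Then \eqref{Dh}, $Dh=-r\Omega^{-2}(L\phi)^2$, integrated in $\ub$, gives \eqref{estimate-ht}: the $\Omega^{-2}$ produces the factor $(|u|^{-p}\Omega_0^2)^{-1}$, and the square of \eqref{estimate-Lphi} produces $a^2$ and the extra $|u|^p$ that combines to $|u|^{p-1}$. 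Equation \eqref{Dhb}, $D\hb=-\Omega^2(1+h\hb)/r$, integrated in $\ub$ with $\hb|_{\Cb_0}=-1$, gives \eqref{estimate-hbt}. Finally \eqref{Dbomega}, $\Db\omega=\Omega^2(1+h\hb)/r^2-L\phi\Lb\phi$, integrated in $u$ from $u_0$ with the datum bound $|u_0|^{1-p}|\omega(\ub,u_0)|\le a$, gives \eqref{estimate-omega}, the $w$ again accommodating the general $\psi$. The bounds \eqref{estimate-bigomega} and \eqref{estimate-r} on $\Omega$ and $r$ follow by integrating $\omega=D\log\Omega$ and $h=\Omega^{-2}Dr$ in $\ub$, using the smallness \eqref{smallness} to keep the multiplicative errors within a factor $2$; these feed back into the bootstrap and must be checked to close with the inflated constants.

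The main obstacle is the interaction of the weight $\Omega_0^2\le|u|^p$ (which degenerates, i.e. tends to $0$, as $u\to0$) with the factor $\Omega^{-2}$ in \eqref{Dh}: the bound \eqref{estimate-ht} on $h-h_0$ carries the large factor $(|u|^{-p}\Omega_0^2)^{-1}$, and one must verify that this does not destroy the lower bound $h_0\le h$, i.e. that $h$ stays nonnegative (no trapped surfaces form inside the region) — this is precisely where $p<\frac13$ is used, so that the accumulated error $\ub|u|^{p-1}(|u|^{-p}\Omega_0^2)^{-1}a^2$, when $\ub$ satisfies \eqref{smallness}, remains controlled; the footnote's claim that $p<\tfrac12$ suffices confirms that $\tfrac13$ is not sharp but is convenient. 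Concretely, one needs $\ub_1|u_1|^{p-1}w_1^2 a\le\varepsilon$ to beat all the accumulated weights simultaneously, which is why $\varepsilon$ is chosen after fixing how the inflated constants propagate. A secondary subtlety is bookkeeping the power of $|u|$ through each integration: each $\Db$- or $D$-integration against the measure $\D u$ or $\D\ub$ must be tracked so that the output exponent is exactly $p-1$ (one less power of $|u|$ than the ``$L\phi$-level'' exponent $p$), which is the self-similar weight and the reason \eqref{smallness} has $|u_1|^{p-1}$. Once the a priori estimates are in place, the trapped surface formation theorems follow by choosing data so that $h$ is driven negative at the far endpoint, but that is outside the scope of this theorem.
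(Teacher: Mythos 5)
Your overall architecture --- bootstrap with inflated constants closed via the smallness \eqref{smallness}, proceeding transport equation by transport equation --- matches the paper's proof, and your treatment of the quantities obtained by integrating in $\ub$ from $\Cb_0$ (namely $r\Lb\phi-\psi$, $h-h_0$, $\hb+1$, and then $\Omega$ and $r$) is essentially the paper's. However, there is a genuine gap at what the paper calls the most delicate step, the estimate \eqref{estimate-Lphi} for $rL\phi$. You propose to integrate \eqref{DbLphi} directly in $u$ from $C_{u_0}$, bounding the source and adding the datum at $u=u_0$. This cannot produce the decaying weight $|u|^{p}$: the datum contributes $|\varphi(u_0)|\le|u_0|^{p}a$, which is $\ge a$ since $|u_0|\ge1$, and the source integral $\int_{u_0}^{u}|\Omega^2hr^{-1}(r\Lb\phi)|\,\D u'\lesssim a\int_{u_0}^{u}|u'|^{p-1}|\psi|\,\D u'$ is of order $|u_0|^{p}a$ and does not tend to $0$ as $u\to0$ (e.g.\ $|\psi|\equiv k$ in the self-similar case gives $\frac{1}{p}(|u_0|^{p}-|u|^{p})$). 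So naive integration only yields $|rL\phi|\lesssim|u_0|^{p}a$, which for $|u|\ll1$ misses \eqref{estimate-Lphi} by the factor $(|u|/|u_0|)^{p}$. The reason \eqref{defa} assumes $\sup_u|u|^{-p}|\varphi(u)|\le a$ is that the comparison function $\varphi(u)=rL\phi|_{\Cb_0}$ already carries the decay, and one must estimate the \emph{difference} $rL\phi(\ub,u)-\varphi(u)$: since $\varphi$ solves the restriction of \eqref{DbLphi} to $\Cb_0$, the difference satisfies \eqref{DbrLphi-varphi}, whose right-hand side $-(\Omega^2hr^{-1}(r\Lb\phi)-\Omega_0^2h_0|u|^{-1}\psi)$ is a difference of the bulk source and its value on $\Cb_0$; telescoping it into products of deviations ($\Omega^2-\Omega_0^2$, $h-h_0$, $r^{-1}-|u|^{-1}$, $r\Lb\phi-\psi$), each of size $O(\ub|u|^{p-1}\cdots)$ by the bootstrap, and integrating in $u$ gives $|rL\phi-\varphi|\lesssim a\ub^{\frac{p}{1-p}}+\ub|u|^{p-1}|u|^{p}w_1^2a^2\lesssim|u|^{p}a$ under \eqref{smallness}.

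This renormalization is not cosmetic: \eqref{estimate-ht} squares $rL\phi$, so the lost factor $|u|^{2p}$ there would not be absorbed by the single power $|u_1|^{p-1}$ in \eqref{smallness}, the bootstrap would not close, and the sharp weights are precisely what the trapped surface formation argument (Theorem \ref{fots}) later exploits. Two smaller corrections: the weight $w_1$ enters through the $\omega$-estimate and through the $u$-integrals $\int_{u_0}^{u}|u'|^{2p-2}|\psi|\,\D u'$ appearing in the $rL\phi$ estimate, not through \eqref{estimate-Lbphit} (which is a clean $\ub$-integration); and the restriction on $p$ inside this theorem comes from the convergence of $\int_{u_0}^{u}|u'|^{2p-2}\,\D u'$, i.e.\ $p<\tfrac12$, rather than from preserving $h\ge0$, which is not asserted (nor needed) in the conclusion.
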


\begin{proof}
We follow the line of the proof in \cite{Liu-Li} with modifications due to the introduction of $p$. Let us start with the bootstrap assumptions:
\begin{align}
\label{bootstrapLphi}|rL\phi|\lesssim&|u|^{p}a\varepsilon^{-\delta},\\
\label{bootstrapomega}|u||\omega|\lesssim&|u|^{p}wa\varepsilon^{-\delta},\\
\label{bootstrapLbphit}|r\Lb\phi-\psi|\lesssim&\ub|u|^{p-1}a\varepsilon^{-\delta},\\
\label{bootstrapht}|h-h_0|\lesssim&\ub|u|^{p-1}(|u|^{-p}\Omega_0^{2})^{-1}a^2\varepsilon^{-\delta},\\
\label{bootstraphbt}|\hb+1|\lesssim&\ub|u|^{p-1}a\varepsilon^{-\delta},
\end{align}
where $\delta>0$ is a small constant ($\delta=\frac{1}{4}$ is sufficient). From \eqref{bootstrapomega}, we have
\begin{align*}
|\log\Omega-\log\Omega_0|\le\int_0^{\ub}|\omega|\D\ub'\lesssim \ub|u|^{p-1}wa\varepsilon^{-\delta}.
\end{align*}
By choosing $\varepsilon>0$ sufficiently small and $\ub|u|^{p-1}wa\le\varepsilon$, we have
\begin{align*}
|\log\Omega-\log\Omega_0|\le \log 2
\end{align*}
and therefore \eqref{estimate-bigomega} holds. Moreover, since $D\Omega=\Omega\omega$, we have
\begin{align}\label{estimate-Omega-Omega0}
|\Omega-\Omega_0|\le\int_0^{\ub}\left|\Omega\omega\right|\D\ub'\lesssim \ub|u|^{p-1}w\Omega_0a\varepsilon^{-\delta}.
\end{align}

For $r$, we note that, from \eqref{bootstrapht} and $0\le h_0\le 1$,
\begin{align}\label{estimate-h}
|\Omega^2h|\lesssim\Omega_0^2\left(h_0+\ub|u|^{p-1}(|u|^{-p}\Omega_0^{2})^{-1}a^2\varepsilon^{-\delta}\right)\lesssim |u|^p a,
\end{align}
 Here we use the fact that $h_0\le 1\le a$, $\Omega_0^2\le |u|^p$ and $\varepsilon$ sufficiently small.  We then use the equation $Dr=\Omega^2h$ to obtain
\begin{align}\label{estimate-rt}
|r-|u||\le\int_0^{\ub}|\Omega^2h|\D\ub\lesssim \ub|u|^p a,
\end{align}
We then deduce that $|r-|u||\lesssim \varepsilon|u|$ and \eqref{estimate-r} holds for $r$ if $\varepsilon$ is sufficiently small.

The most delicate part is $L\phi$. We consider the equation \eqref{DbLphi}. We write
\begin{align}\label{DbrLphi-varphi}
\frac{\partial}{\partial u}(rL\phi-\varphi)=-\left(\Omega^2hr^{-1}(r\Lb\phi)-\Omega_0^2h_0|u|^{-1}\psi\right).
\end{align}
Using \eqref{bootstrapLbphit}, \eqref{bootstrapht}, \eqref{estimate-Omega-Omega0}, \eqref{estimate-h}, \eqref{estimate-rt}, the right hand side can be estimated by
\begin{align*}
&|\Omega^2hr^{-1}(r\Lb\phi)-\Omega_0^2h_0|u|^{-1}\psi|\\
\lesssim&|\Omega^2-\Omega_0^2||h_0|u|^{-1}\psi|+|\Omega^2||h-h_0|||u|^{-1}\psi|+|\Omega^2h||r^{-1}-|u|^{-1}||\psi|+|\Omega^2hr^{-1}||r\Lb\phi-\psi|\\
\lesssim&|u|^{-1}\varepsilon^{-\delta}\left(\ub|u|^{p-1}\Omega_0^2wa|\psi|+\ub|u|^{p-1}|u|^p a^2(1+|\psi|)\right).
\end{align*}
Using $\Omega_0^2\le |u|^p$, $a\ge 1$ again, and \eqref{defw}, we have
\begin{align}\label{proof-estimate-Lphiimp}
\int_{u_0}^u|\Omega^2hr^{-1}(r\Lb\phi)-\Omega_0^2h_0|u'|^{-1}\psi|\D u'\lesssim \varepsilon^{-\delta} \ub|u|^{p-1}|u|^p w_1^2a^2\end{align}

Integrating the equation \eqref{DbrLphi-varphi} and using this estimate, we then have
\begin{equation}\label{estimate-Lphit}
\begin{split}
|rL\phi-\varphi|
\lesssim|rL\phi-\varphi|\big|_{C_{u_0}}+ \varepsilon^{-\delta} \ub|u|^{p-1}|u|^p w_1^2a^2.
\end{split}
\end{equation}
By the definition of $a$ in \eqref{defa}, 
$$|rL\phi-\varphi|\big|_{C_{u_0}}\le a\ub^{\frac{p}{1-p}}\le |u|^{p}a.$$
 Then the estimate \eqref{estimate-Lphi} follows, improving the bootstrap assumption \eqref{bootstrapLphi}.

For $\Lb\phi$, we note that, from \eqref{bootstraphbt}, and if $\varepsilon$ is sufficiently small, 
\begin{align}\label{estimate-hb}
|\hb|\lesssim1+\ub|u|^{p-1}a\varepsilon^{-\delta}\lesssim1, 
\end{align}
We then integrate the equation \eqref{DLbphi} and obtain, by \eqref{estimate-Lphi} and \eqref{estimate-hb} we have proved above,
\begin{align*}
|r\Lb\phi-\psi|\lesssim\int_0^{\ub}|\hb L\phi|\D\ub\lesssim\ub|u|^{p-1}a,
\end{align*}
which is \eqref{estimate-Lbphit}, improving \eqref{bootstrapLbphit}. We also have
\begin{equation}\label{estimate-Lbphi}
|r\Lb\phi|\lesssim|\psi|+\ub|u|^{p-1}a.
\end{equation}

For $h$ and $\hb$, we use the equations \eqref{Dh} and \eqref{Dhb}. Using \eqref{estimate-Lphi}, we have
\begin{align*}
|h-h_0|\lesssim\int_0^{\ub} |r\Omega^{-2}(L\phi)^2|\D \ub\lesssim \Omega_0^{-2}\ub|u|^{2p-1}a^2,
\end{align*}
which is the desired estimate \eqref{estimate-ht}, improving \eqref{bootstrapht}. From \eqref{Dhb}, \eqref{estimate-h} and \eqref{estimate-hb}, we have
\begin{align*}
|\hb+1|\lesssim\int_0^{\ub}\left|\frac{\Omega^2(1+h\hb)}{r}\right|\D\ub\lesssim \ub|u|^{p-1}a, 
\end{align*}
which is the desired estimate \eqref{estimate-hbt}, improving \eqref{bootstraphbt}.

For $\omega$, we use the equation \eqref{Dbomega}. The right hand side of \eqref{Dbomega} can be estimated by, using   \eqref{estimate-Lphi}, \eqref{estimate-Lbphi}, \eqref{estimate-h} and \eqref{estimate-hb},
\begin{align*}
\left|\frac{\Omega^2(1+h\hb)}{r^2}-L\phi\Lb\phi\right|\lesssim|u|^{-2}|u|^p (1+|\psi|)a.\end{align*}
Integrating \eqref{Dbomega} and using \eqref{Omega_0}, we then have
\begin{align*}
|\omega|\lesssim|u|^{p-1}wa,
\end{align*}
which is the desired estimate \eqref{estimate-omega}, improving \eqref{bootstrapomega}. All the bootstrap assumptions \eqref{bootstrapLphi}-\eqref{bootstraphbt} are improved and the proof is then completed.

\end{proof}

\section{Instability of the $k$-self-similar solution}\label{section:kselfsimilar}

\subsection{The $k$-self-similar  solution revisit}\label{sec:revisit}

We review some more details about the $k$-self-similar solutions needed in our proof. Recall that in \cite{Chr94} it was studied in self-similar Bondi coordinates \eqref{Bondimetric}.  To transfer the system \eqref{ES} and \eqref{Wave} to an autonomous system, Christodoulou introduced
\begin{equation}\label{defthetaalpha}\theta=r\frac{\partial\phi}{\partial r},\ \alpha=\beta^{-1}=\left(1-\frac{\mathrm{e}^{\nu-\lambda}}{-\frac{2r}{u}}\right)^{-1},\end{equation}
and a new self-similar parameter
$$s=\log\left(-\frac{r}{u}\right),$$
then from the Einstein equations \eqref{ES}, he deduced that
\begin{equation}\label{autonomous}
\begin{split}
\frac{\D\alpha}{\D s}&=\alpha((\theta+k)^2+(1-k^2)(1-\alpha)),\\
\frac{\D\theta}{\D s}&=k\alpha(k\theta-1)+\theta((\theta+k)^2-(1+k^2)).
\end{split}
\end{equation}
 Note that $\frac{\partial}{\partial r}$ is outgoing null according to \eqref{Bondimetric} and then $\theta$ is $r$ times the outgoing null derivative of $\phi$, and moreover this derivative is parametrized by $r$. Conversely, once $\alpha$ and $\theta$ are solved, $\lambda$ can be determined by
\begin{equation}\label{solvelambda}\mathrm{e}^{2\lambda}=1+k^2+\frac{\beta}{1-\beta}(\theta+k)^2\end{equation}
and then $\nu$ and $\phi$ can be solved in view of \eqref{defthetaalpha}.

The center corresponds to $s=-\infty (r=0)$, regularity on $\Gamma$ requires $\lambda\to0$ as $s\to-\infty$ and we impose  $\nu\to0$ by rescaling $u$ on $\Gamma$ (such that $u$ is the proper time along $\Gamma$). From  the definition of $\alpha$ we will have\footnote{Here $a\sim b$ means $\lim\frac{a}{b}=1$. }
\begin{equation*}\alpha\sim-2\mathrm{e}^s, s\to -\infty.\end{equation*}
By \eqref{solvelambda} we will have $\theta\to 0$ or $-2k$ but the latter possibility can be ruled out for $(\alpha,\theta)=(0,2k)$ is not a critical point of \eqref{autonomous}. Further analysis of \eqref{autonomous} around $s=-\infty$ suggests that
\begin{equation*}\theta\sim k\mathrm{e}^s, s\to -\infty.\end{equation*}
Christodoulou showed that when $0<k^2<1$, the solution can be solved from $s=-\infty$ to some $s=s_*<+\infty$, where $\alpha\to-\infty$, corresponding to the past null cone of $\mathcal{O}$. The solution in $s\in (-\infty, s_*)$ then corresponds to the interior solution. Further analysis showed that $\alpha<0$ in the interior region, $\alpha$ has the asymptotic behavior
\begin{equation}\label{alphas*}\alpha\sim-\frac{1}{1-k^2}\frac{1}{s_*-s}, s\to s_*.\end{equation}
and $\theta$ has the asymptotic behavior 
\begin{equation}\label{thetas*}\theta\to\frac{1}{k}, s\to s_*.\end{equation}

In view of the discussions before the statement of Theorem \ref{estimate}, the most important quantity on $\Cb_0$ is $\psi$, $r$ times the incoming null derivative (parametrized by $r$) of $\phi$. In terms of self-similar Bondi coordinates, it is the opposite of the following quantity
\begin{equation}\label{defzeta}\zeta=-2r\mathrm{e}^{\lambda-\nu}\frac{\partial\phi}{\partial u}+r\frac{\partial\phi}{\partial r}.\end{equation}    
By $k$-self-similarity, $\theta$ and $\zeta$ should be related. In fact we have (also see \cite{Chr94})
$$\zeta=-\frac{\theta+k\alpha}{\alpha-1},$$
and then by \eqref{thetas*}, 
\begin{equation}\label{zetas*} \zeta\to -k, s\to s_*.\end{equation}

When $0<k^2<\frac{1}{3}$, the solutions extend  to $s_*$ only in a finite H\"older regularity\footnote{Here $a\approx b$ means $\frac{b}{c}\le a\le cb$ for some $c>0$ when $s$ is sufficiently close to $s_*$.}:
\begin{align*}\beta=\frac{1}{\alpha}\sim& -(1-k)^2(s_*-s),\\
\theta-\frac{1}{k}\approx& (s_*-s)^{\frac{k^2}{1-k^2}},
\end{align*}
which cannot be eliminated by coordinate transformation. There exists (also finite regularity) exterior extensions from $s=s_*$ to $s=+\infty$, yielding naked singularity solutions.

\subsection{A trapped surface formation theorem}

We are going to prove a sharp trapped surface formation theorem adapted to the $k$-self-similar background. 

\begin{theorem}\label{fots} Suppose in addition to Theorem \ref{estimate} that 
\begin{align}\label{Omegaupper2}
\Omega_0^2(u_1)\le  |u_1|^{\widetilde{p}}
\end{align}
where $\widetilde{p}\ge p$, and for any $u\in [u_0,u_1]$,
\begin{equation}\label{psivarphibound}|\psi(u)|\le 1, |(\Omega_0^2h_0)^{-1}\varphi(u)|\le K\end{equation}
for some $K\ge1$. Suppose also that
\begin{align}\label{lowerbound}\int_0^{\ub_1}\left|rL\phi\big|_{C_{u_0}}-\varphi(u_0)\right|^2\D\ub\ge c_1 |u_1|^{\widetilde{p}+1},\end{align}
for some $c_1$ depending on $K$, where $-1\le u_1<0$ and 
\begin{align}\label{deltau1}\ub_1 |u_1|^{p-1}a=\varepsilon|u_1|^{\frac{\widetilde{p}-p}{2}},\end{align}
where $\varepsilon$ may be smaller than in Theorem \ref{estimate}. Then $S_{\ub_1,u_1}$ is a closed trapped surface.
\end{theorem}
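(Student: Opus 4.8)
The plan is to show that both null expansions are negative at $S_{\ub_1,u_1}$, i.e.\ $h(\ub_1,u_1)<0$ and $\hb(\ub_1,u_1)<0$. The ingoing one is immediate: by \eqref{estimate-hbt} and \eqref{deltau1}, $|\hb(\ub_1,u_1)+1|\lesssim\ub_1|u_1|^{p-1}a=\varepsilon|u_1|^{(\widetilde p-p)/2}\le\varepsilon$, so $\hb(\ub_1,u_1)<0$. The whole work is therefore $h(\ub_1,u_1)<0$, for which I would integrate the monotonicity equation \eqref{Dh}, rewritten as $Dh=-\Omega^{-2}r^{-1}(rL\phi)^2$, in $\ub$ along $C_{u_1}$ from $0$ to $\ub_1$, obtaining
\[ h(\ub_1,u_1)=h_0(u_1)-\int_0^{\ub_1}\frac{(rL\phi)^2(\ub,u_1)}{\Omega^2 r}\,\D\ub, \]
and then make the integral beat $h_0(u_1)\le 1$. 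Before that, one checks that Theorem \ref{estimate}, in particular the smallness \eqref{smallness}, applies on $[0,\ub_1]\times[u_0,u_1]$: the bound $|\psi|\le 1$ from \eqref{psivarphibound} together with $p<\tfrac12$ forces, through \eqref{defw}, the universal bound $w_1\le(1-2p)^{-1}$, so by \eqref{deltau1} the left side of \eqref{smallness} is $\le(1-2p)^{-2}\varepsilon|u_1|^{(\widetilde p-p)/2}$, which is admissible once the $\varepsilon$ of this theorem is chosen small relative to that of Theorem \ref{estimate}. Hence all of \eqref{estimate-bigomega}--\eqref{estimate-omega} are available on the rectangle.

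The key step is to transport the $L^2$ lower bound \eqref{lowerbound} on the initial perturbation $g(\ub):=rL\phi(\ub,u_0)-\varphi(u_0)$ from $C_{u_0}$ to $C_{u_1}$. Integrating \eqref{DbrLphi-varphi} in $u$ from $u_0$ to $u_1$ at fixed $\ub$ gives
\[ rL\phi(\ub,u_1)=g(\ub)+\varphi(u_1)-\int_{u_0}^{u_1}\Bigl(\Omega^2hr^{-1}(r\Lb\phi)-\Omega_0^2h_0|u'|^{-1}\psi\Bigr)\D u'. \]
Re-running the computation behind \eqref{proof-estimate-Lphiimp}, now with the \emph{established} ($\varepsilon^{-\delta}$-free) estimates \eqref{estimate-Lphi}, \eqref{estimate-Lbphit}, \eqref{estimate-ht}, \eqref{estimate-hbt} and $w_1\lesssim1$, the last integral is $\lesssim a^2\ub_1|u_1|^{2p-1}$, which by \eqref{deltau1} is a quantity carrying a power of $|u_1|$ strictly above $|u_1|^{(\widetilde p+p)/2}$; meanwhile $|\varphi(u_1)|\le K\Omega_0^2(u_1)h_0(u_1)\le K|u_1|^{\widetilde p}$ by \eqref{psivarphibound}, \eqref{Omegaupper2} and $h_0\le1$. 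Thus pointwise $(rL\phi)^2(\ub,u_1)\ge\tfrac12 g(\ub)^2-CK^2|u_1|^{2\widetilde p}-(\text{term of the same or smaller order})$. Integrating in $\ub\in[0,\ub_1]$, bounding the first term below by $\tfrac12 c_1|u_1|^{\widetilde p+1}$ via \eqref{lowerbound}, and observing that every error term is multiplied by $\ub_1$ and that \eqref{deltau1} together with $\widetilde p\ge p$ and $|u_1|\le1$ makes $\ub_1|u_1|^{2\widetilde p}$ (and likewise the remaining errors) $\le\varepsilon|u_1|^{\widetilde p+1}$, one gets
\[ \int_0^{\ub_1}(rL\phi)^2(\ub,u_1)\,\D\ub\ \ge\ \tfrac14\,c_1\,|u_1|^{\widetilde p+1}, \]
provided $c_1$ is chosen large enough depending on $K$ and $\varepsilon$ small enough.

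Finally I would feed this into the displayed formula for $h(\ub_1,u_1)$. Using \eqref{estimate-bigomega}, \eqref{estimate-r} and \eqref{Omegaupper2} to bound $\Omega^2 r\le 8|u_1|^{1+\widetilde p}$ on $C_{u_1}$,
\[ h(\ub_1,u_1)\ \le\ h_0(u_1)-\frac{1}{8|u_1|^{1+\widetilde p}}\int_0^{\ub_1}(rL\phi)^2\,\D\ub\ \le\ 1-\frac{c_1}{32}\ <\ 0 \]
as soon as $c_1>32$, using $0\le h_0\le1$. Together with $\hb(\ub_1,u_1)<0$ this shows $S_{\ub_1,u_1}$ is a closed trapped surface. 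The main obstacle is the transport step: one must make sure the nonlinear corrections accumulated between $C_{u_0}$ and $C_{u_1}$ neither destroy the $L^2$ lower bound on $rL\phi$ nor overwhelm the amplification supplied by the small factor $\Omega_0^2(u_1)\le|u_1|^{\widetilde p}$ sitting in \eqref{Dh} — this small factor is exactly the ``finite blueshift'' that here replaces the infinite one available on $\mathcal{N}$. The balance is engineered by the scaling relation \eqref{deltau1} between $\ub_1$, $a$ and $|u_1|$ and by the exponent gap $\widetilde p\ge p$ measuring how close $u_1$ lies to the singularity; the bookkeeping must confirm that, after invoking \eqref{deltau1}, each error term carries a power of $|u_1|$ strictly larger than $|u_1|^{\widetilde p+1}$, so that for $u_1$ close enough to $0$ (equivalently, for the perturbation made small) the term \eqref{lowerbound} dominates.
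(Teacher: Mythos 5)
Your overall strategy coincides with the paper's: verify \eqref{smallness} via $w_1\le(1-2p)^{-1}$, transport the $L^2$ lower bound \eqref{lowerbound} from $C_{u_0}$ to $C_{u_1}$, and convert it into $h(\ub_1,u_1)<0$ through \eqref{Dh} and the blueshift factor $\Omega_0^{-2}(u_1)\ge|u_1|^{-\widetilde{p}}$. The final step and the treatment of $\varphi(u_1)$ are fine. The gap is in the transport step, which is exactly the point the paper singles out as delicate. You bound the accumulated error $E(\ub)=\int_{u_0}^{u_1}\bigl(\Omega^2hr^{-1}(r\Lb\phi)-\Omega_0^2h_0|u'|^{-1}\psi\bigr)\D u'$ pointwise by re-running \eqref{proof-estimate-Lphiimp} with the a priori estimates of Theorem \ref{estimate}. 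Those estimates (in particular \eqref{estimate-ht}, which rests on $|rL\phi|\lesssim|u|^pa$) give $|E|\lesssim\ub_1|u_1|^{2p-1}a^2$, and by \eqref{deltau1} this equals $\varepsilon a|u_1|^{(\widetilde{p}+p)/2}$: the power of $|u_1|$ is exactly $(\widetilde{p}+p)/2$, not strictly above it, and there is a loose factor of $a$. Consequently $\int_0^{\ub_1}E^2\,\D\ub\approx\varepsilon^3a|u_1|^{(\widetilde{p}-p)/2}\cdot|u_1|^{\widetilde{p}+1}$, so your scheme closes only if $c_1\gtrsim\varepsilon^3a|u_1|^{(\widetilde{p}-p)/2}$. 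Since $a$ in \eqref{defa} is not bounded in terms of $K$ (in the applications $a=\max\{|t|,K\}$ and $|t|$ can exceed $K$, e.g.\ at the threshold $\widetilde{p}=p$ where $|t|$ must be large), this is incompatible with the requirement that $c_1$ depend only on $K$.

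The paper closes this by a different bookkeeping at precisely this point: instead of a pointwise bound it runs a Gr\"onwall inequality in $u$ for the quantity $\int_0^{\ub_1}|rL\phi(\ub,u)-\varphi(u)|^2\,\D\ub$ itself and, crucially, re-estimates $h-h_0$ directly from \eqref{Dh} as $\lesssim|u|^{-1}\Omega_0^{-2}\int_0^{\ub}(rL\phi-\varphi)^2\,\D\ub'+\ub|u|^{-1}\Omega_0^{2}K^2$ using \eqref{psivarphibound}. This replaces the problematic $a^2$ by $K^2$ plus the Gr\"onwall variable, and the remaining source terms come packaged as $(\ub_1|u|^{p-1}a)^2|u|^p$, which \eqref{deltau1} converts into $\varepsilon^2|u_1|^{\widetilde{p}+1}$ with no stray factor of $a$. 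You need to adopt this refinement (or add a hypothesis bounding $a|u_1|^{(\widetilde{p}-p)/2}$) in the transport step; the rest of your argument then goes through.
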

\begin{proof}
Note that the bound for $\psi$ leads to an upper bound for $w$ defined in \eqref{defw}:
$$w\le\frac{1}{1-2p} \le 3,$$
as $p\in(0,\frac{1}{3})$. Then \eqref{deltau1} implies \eqref{smallness} if we choose a smaller $\varepsilon$, and we can use the estimates established in Theorem \ref{estimate}.

By \eqref{DbrLphi-varphi}, we will consider
$$\frac{\partial}{\partial u}|rL\phi(\ub, u)-\varphi(u)|^2=-2\left(\Omega^2hr^{-1}(r\Lb\phi)-\Omega_0^2h_0|u|^{-1}\psi\right)(rL\phi(\ub, u)-\varphi(u)).$$
We need to revisit the estimate \eqref{proof-estimate-Lphiimp} to deal with the $a^2$ term coming from $h-h_0$ or $h$. Together with \eqref{estimate-Lphi} and $w\le 3$, we have  (the factor $\varepsilon^{-\delta}$ is absent now because the bootstrap assumptions there are now improved) 
$$\left|\frac{\partial}{\partial u}|rL\phi(\ub, u)-\varphi(u)|^2\right|\lesssim |u|^p a\cdot\left(\ub|u|^{p-2}\Omega_0^2a+|\Omega^2h|\ub|u|^{p-2}a+ \Omega_0^2|u|^{-1}|h-h_0| \right).$$
Using the equation \eqref{Dh} directly, we have, by \eqref{psivarphibound},
$$|h-h_0|\lesssim |u|^{-1}\Omega_0^{-2}\int_0^{\ub}(rL\phi)^2\D\ub'\lesssim |u|^{-1}\Omega_0^{-2}\int_0^{\ub}(rL\phi-\varphi)^2\D\ub'+\ub|u|^{-1}\Omega_0^{2}K^2,$$
and hence (using $|h_0|\le1$)
$$|h|\lesssim  |u|^{-1}\Omega_0^{-2}\int_0^{\ub}(rL\phi-\varphi)^2\D\ub'+\ub|u|^{-1}\Omega_0^{2}K^2+1,$$
plugging in we will have (using $\Omega_0^2\le |u|^p$ and $a\ge1$)
$$\left|\frac{\partial}{\partial u}|rL\phi(\ub, u)-\varphi(u)|^2\right|\lesssim c_K \ub|u|^{p-1}a \cdot|u|^{p-1}a |u|^p+  |u|^{p-2}a\int_{0}^{\ub}(rL\phi-\varphi)^2\D \ub'.$$
  Then (integrating $\ub\in[0,\ub_1]$)
$$\left|\frac{\partial}{\partial u}\int_0^{\ub_1}|rL\phi(\ub,u)-\varphi(u)|^2\D\ub\right|\lesssim c_K(\ub_1|u|^{p-1}a)^2 |u|^p+\ub_1|u|^{p-1}a\cdot|u|^{-1}\int_0^{\ub_1}|rL\phi(\ub,u)-\varphi(u)|^2\D\ub,$$
by \eqref{smallness}, if $\varepsilon$ is small enough, we have for some $c$ depending only on $K$, 
\begin{align*}&\int_{0}^{\ub_1}|rL\phi(\ub,u_1)-\varphi(u_1)|^2\D\ub\\
\ge &\frac{1}{2}\int_{0}^{\ub_1}|rL\phi(\ub,u_0)-\varphi(u_0)|^2\D\ub- c(\ub_1|u_1|^{p-1}a)^2\cdot|u_1|^{p+1}\\\ge &(\frac{1}{2}c_1-c\varepsilon^3)|u_1|^{\widetilde{p}+1}\end{align*}
where we have plugging in \eqref{deltau1}. Finally, by \eqref{psivarphibound} again, 
\begin{align*}\int_{0}^{\ub_1}|rL\phi(\ub,u_1)|^2\D\ub\ge& \frac{1}{2}(\frac{1}{2}c_1-c\varepsilon^2)|u_1|^{1+\widetilde{p}}-\ub_1\Omega_0^4(u_1)K^2\\
\ge&\frac{1}{2}(\frac{1}{2}c_1-c\varepsilon^2-K^2\varepsilon)|u_1|^{1+\widetilde{p}}.
\end{align*}
Let us choose $c_1$ such that $\frac{1}{2}(\frac{1}{2}c_1-c\varepsilon^2-K^2\varepsilon)\ge 2^4$. By \eqref{Dh} and \eqref{Omegaupper2}, evaluating at $(\ub_1,u_1)$, 
$$h-h_0\le -2^{-3} \Omega_0^{-2}|u_1|^{-1}\int_0^{\ub_1} (rL\phi)^2\D\ub\le -2.$$
Then $h\le-1<0$ and $S_{\ub_1,u_1}$ is trapped.

 \end{proof}

Now suppose that $rL\phi$ on $C_{u_0}$ assumes the form
\begin{equation}\label{Holderform} rL\phi(\ub,u_0)-\varphi(u_0)= t \ub^{\frac{p}{1-p}}, 0\le \ub\le\ub_1,\end{equation}
and we impose in addition that $\ub$ on $C_{u_0}$ to be the affine parameter. 
Then $\omega=0$ on $C_{u_0}$ and by \eqref{psivarphibound}, $a$ can be chosen simply to be $\max\{|t|, K\}$. The integral of  \eqref{lowerbound} becomes
\begin{align*}\int_0^{\ub_1}|rL\phi\big|_{C_{u_0}}-\varphi(u_0)|^2\D\ub=\frac{1-p}{1+p}t^2\ub_1^{\frac{1+p}{1-p}}=&\frac{1-p}{1+p}t^2(\varepsilon a^{-1})^{\frac{1+p}{1-p}}|u_1|^{\left(\frac{\widetilde{p}-p}{2}+1-p\right){\frac{1+p}{1-p}}}\\
=&\frac{1-p}{1+p}t^2(\varepsilon a^{-1})^{\frac{1+p}{1-p}}|u_1|^{1+\widetilde{p}+\frac{1}{2}(\widetilde{p}-p)\frac{3p-1}{1-p}}. \end{align*}
Then by Theorem \ref{fots},  $S_{\ub_1,u_1}$ is a closed trapped surface if
\begin{equation}\label{tlowerbound} t^2\ge \frac{1+p}{1-p}c_1(\varepsilon^{-1} a)^{\frac{1+p}{1-p}}|u_1|^{\frac{1}{2}(\widetilde{p}-p)\frac{1-3p}{1-p}}.\end{equation}
We can choose $a=\max\{|t|, K\}$,  and (note that $c_1$ depends on $K$)
\begin{equation}\label{deft}
|t|=\begin{cases}\left( \frac{1+p}{1-p}c_1(\varepsilon^{-1}K)^{\frac{1+p}{1-p}} |u_1|^{\frac{1}{2}(\widetilde{p}-p)\frac{1-3p}{1-p}}\right)^{\frac{1}{2}}, &  \frac{1+p}{1-p}c_1(\varepsilon^{-1}K)^{\frac{1+p}{1-p}} |u_1|^{\frac{1}{2}(\widetilde{p}-p)\frac{1-3p}{1-p}}\le K^{2},\\
\max\left\{K,\left( \frac{1+p}{1-p}c_1\varepsilon^{-\frac{1+p}{1-p}} |u_1|^{\frac{1}{2}(\widetilde{p}-p)\frac{1-3p}{1-p}}\right)^{\frac{1-p}{1-3p}}\right\}, &  \frac{1+p}{1-p}c_1(\varepsilon^{-1}K)^{\frac{1+p}{1-p}} |u_1|^{\frac{1}{2}(\widetilde{p}-p)\frac{1-3p}{1-p}}> K^{2}.
\end{cases}
\end{equation}

Let us summarize the above arguments as follows.
\begin{theorem}\label{fots2}
Consider the initial value problem on $C_{u_0}\cup \Cb_0$ in double null coordinates $(\ub,u)$. Assume $u_0\le-1$,  $\Omega\big|_{C_{u_0}}=\text{const}\le 1$ and there is some $u_1\in (-1,0)$ and $p>0$ such that for all $u\in (u_0,u_1)$,
$$\Omega_0(u)\le|u|^{p},$$
and for another $\widetilde{p}\ge p$,
$$\Omega_0(u_1)\le |u_1|^{\widetilde{p}}.$$
Suppose also that
$$|\psi|\le 1, |(\Omega_0^2h_0)^{-1}\varphi|\le K$$
for some $K\ge1$. We further assume
$$rL\phi(\ub,u_0)-\varphi(u_0)=t\ub^{\frac{p}{1-p}}, 0\le \ub\le \ub_1,$$
where $\ub_1$ is defined through \eqref{deltau1}, in which $\varepsilon$ is a constant depending on $K$, and $t$ verifies \eqref{tlowerbound} (and in particular can be to be  \eqref{deft}).  Then $S_{\ub_1,u_1}$ is a closed trapped surface.
\end{theorem}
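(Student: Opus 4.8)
The plan is to reduce Theorem \ref{fots2} directly to Theorem \ref{fots}, since \eqref{fots2} is essentially the concrete instantiation of the abstract hypotheses of \eqref{fots} to initial data of the special H\"older form $rL\phi(\ub,u_0)-\varphi(u_0)=t\ub^{p/(1-p)}$. First I would verify the structural hypotheses of Theorem \ref{fots}: the assumptions $u_0\le -1$, $\Omega_0(u)\le |u|^p$ on $(u_0,u_1)$, $\Omega_0(u_1)\le|u_1|^{\widetilde p}$, $|\psi|\le 1$ and $|(\Omega_0^2h_0)^{-1}\varphi|\le K$ are carried over verbatim, so the only thing that needs checking is the bound \eqref{defa} defining $a$ and the lower bound \eqref{lowerbound}. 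For $a$: on $C_{u_0}$ we have $\ub^{-p/(1-p)}|rL\phi(\ub,u_0)-\varphi(u_0)|=|t|$; since $\ub$ is taken to be the affine parameter on $C_{u_0}$ we get $\omega=D\log\Omega=0$ there, so $|u_0|^{1-p}|\omega(\ub,u_0)|=0$; and $\sup_{u_0\le u\le u_1}|u|^{-p}|\varphi(u)|\le K$ because $\Omega_0\le 1$ and $h_0\le 1$ give $|\varphi|\le |\Omega_0^2h_0|\cdot K\le K$, hence $|u|^{-p}|\varphi|\le K|u|^{-p}\le K$ for $|u|\ge 1$. Therefore $a=\max\{1,|t|,K\}=\max\{|t|,K\}$ is an admissible choice.

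Next I would compute the left-hand side of \eqref{lowerbound} explicitly for this data:
\begin{align*}
\int_0^{\ub_1}\left|rL\phi\big|_{C_{u_0}}-\varphi(u_0)\right|^2\D\ub
=t^2\int_0^{\ub_1}\ub^{\frac{2p}{1-p}}\D\ub
=\frac{1-p}{1+p}\,t^2\,\ub_1^{\frac{1+p}{1-p}}.
\end{align*}
Then I would substitute the definition \eqref{deltau1}, $\ub_1=\varepsilon a^{-1}|u_1|^{\frac{\widetilde p-p}{2}+1-p}$, raise it to the power $\frac{1+p}{1-p}$, and collect the exponent of $|u_1|$: a short arithmetic check gives $\left(\frac{\widetilde p-p}{2}+1-p\right)\frac{1+p}{1-p}=1+\widetilde p+\frac12(\widetilde p-p)\frac{3p-1}{1-p}$, which is exactly the exponent appearing in the displayed computation in the excerpt. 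So \eqref{lowerbound}, which requires this integral to be $\ge c_1|u_1|^{\widetilde p+1}$, holds precisely when
\begin{align*}
\frac{1-p}{1+p}\,t^2\,(\varepsilon a^{-1})^{\frac{1+p}{1-p}}|u_1|^{\frac12(\widetilde p-p)\frac{3p-1}{1-p}}\ge c_1,
\end{align*}
i.e.\ when $t^2\ge \frac{1+p}{1-p}c_1(\varepsilon^{-1}a)^{\frac{1+p}{1-p}}|u_1|^{\frac12(\widetilde p-p)\frac{1-3p}{1-p}}$, which is exactly condition \eqref{tlowerbound}. (Note $\frac{3p-1}{1-p}<0$ for $p<\frac13$, so moving it to the other side flips the sign in the exponent.) Once this holds, Theorem \ref{fots} applies and yields that $S_{\ub_1,u_1}$ is a closed trapped surface.

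Finally, I would justify that the explicit choice \eqref{deft} is consistent, namely that plugging it into $a=\max\{|t|,K\}$ makes \eqref{tlowerbound} an equality (or at least valid): in the first branch $|t|\le K$ so $a=K$ and \eqref{deft} is exactly the value saturating \eqref{tlowerbound} with $a=K$; in the second branch one has $|t|\ge K$ so $a=|t|$, and \eqref{tlowerbound} becomes $|t|^{2-\frac{1+p}{1-p}}=|t|^{\frac{1-3p}{1-p}}\ge\frac{1+p}{1-p}c_1\varepsilon^{-\frac{1+p}{1-p}}|u_1|^{\frac12(\widetilde p-p)\frac{1-3p}{1-p}}$, which is satisfied by the stated formula since $\frac{1-3p}{1-p}>0$. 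The only mild subtlety — and the step I expect to require the most care rather than being purely mechanical — is making sure all of this is compatible with the smallness hypothesis \eqref{smallness} of Theorem \ref{estimate} (equivalently of Theorem \ref{fots}): the bound $|\psi|\le 1$ forces $w\le \frac{1}{1-2p}\le 3$ (as already noted in the proof of Theorem \ref{fots}), so $w_1\le 3$, and \eqref{deltau1} gives $\ub_1|u_1|^{p-1}w_1^2a\le 9\varepsilon|u_1|^{\frac{\widetilde p-p}{2}}\le 9\varepsilon$ since $|u_1|\le 1$ and $\widetilde p\ge p$; choosing $\varepsilon$ a further factor $9$ smaller than in Theorem \ref{estimate} restores \eqref{smallness}, and this is precisely the "$\varepsilon$ may be smaller" clause. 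With the hypotheses of Theorem \ref{fots} all verified, its conclusion is the conclusion of Theorem \ref{fots2}, completing the proof.
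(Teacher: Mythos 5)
Your proposal is correct and follows essentially the same route as the paper: Theorem \ref{fots2} is indeed proved there by instantiating Theorem \ref{fots} on the H\"older-form data, computing the integral in \eqref{lowerbound} explicitly, substituting \eqref{deltau1}, and reading off \eqref{tlowerbound}; your exponent arithmetic and your treatment of the smallness condition \eqref{smallness} via $w_1\le\frac{1}{1-2p}\le 3$ match the paper's. One small slip: your verification of $\sup_{u_0\le u\le u_1}|u|^{-p}|\varphi(u)|\le K$ via ``$|\varphi|\le K$, hence $|u|^{-p}|\varphi|\le K|u|^{-p}\le K$ for $|u|\ge1$'' does not cover $u$ near $u_1\in(-1,0)$, where $|u|<1$ and $|u|^{-p}>1$. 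The correct (and immediate) chain is to keep the lapse factor: $|\varphi|\le K\,\Omega_0^2h_0\le K\,\Omega_0^2\le K|u|^{p}$ by the hypothesis $\Omega_0(u)\le|u|^p$ (together with $\Omega_0\le1$) and $0\le h_0\le1$, which gives $|u|^{-p}|\varphi|\le K$ on all of $[u_0,u_1]$. With that one-line repair the argument is complete.
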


In particular, if in addition $\widetilde{p}>p$, then $t$, considered as a function of $u_1$, tends to zero as $u_1\to0$, which refers to instability.

\begin{remark}To construct $C^1$ (or even smooth) perturbations (but close to the original data only in $C^{\frac{p}{1-p}}$), we can choose a larger $|t|$ and then make a cut off around $\ub=0$ to make $rL\phi$ smooth in $[0,\ub_1]$ so that \eqref{lowerbound} still holds.  See the last section of \cite{L-Z3}.
 \end{remark}

\subsection{Proof of Theorem \ref{thm:exterior}: Exterior instability}

 Let us first fix a $k$ with $0<k^2<\frac{1}{3}$.  We will apply Theorem \ref{fots2} with $\Cb_0=\{s=s_*\}$, the past null cone of the singularity $\mathcal{O}$, and $C_{u_0}$ be a outgoing null cone starting from somewhere of $\Cb_0$, such that $u_0\le -1$. 

On $s=s_*$ we have $\theta=\frac{1}{k}$ and $\zeta=-k$ by \eqref{thetas*} and \eqref{zetas*}. Converted to double null quantities, we have
$$\varphi(u)=\Omega_0^2(u)h_0(u)\frac{1}{k}, \psi(u)=k,$$
for any $u\in[u_0,0)$, which verifies the condition \eqref{psivarphibound} with $K=\frac{1}{k}$. The equation in \eqref{Omega_0} then implies that, recalling $\Omega_0(u_0)\le 1$, for all $u\in (u_0,0)$,
\begin{equation}\label{Omegaupper-ext}\Omega_0^2(u)\le |u|^{k^2}.\end{equation}
We choose $\widetilde{p}=k^2$, and $p<\widetilde{p}$ is arbitrarily chosen. Let us take
$$rL\phi(\ub,u_0)-\varphi(u_0)=t\ub^{\frac{p}{1-p}}, 0\le\ub\le\ub_1$$
 where $\ub_1$ is defined through \eqref{deltau1} and $t$ is defined by \eqref{deft}. Then $S_{\ub_1,u_1}$ is a closed trapped surface by Theorem \ref{fots2}. Because \eqref{Omegaupper-ext} holds for all $u\in (u_0,0)$, we can choose an arbitrary $u_1\in (u_0,0)$ and view $t$ as a function of $u_1$. Then $rL\phi$ can be viewed as a family of initial data on $C_{u_0}$ with parameter $u_1$. When $u_1\to0$,
$$\frac{1+p}{1-p}(\varepsilon^{-1}K)^{\frac{1+p}{1-p}} |u_1|^{\frac{1}{2}(\widetilde{p}-p)\frac{1-3p}{1-p}}\to 0,$$
and hence $t\to0$ (taking the first choice in \eqref{deft}). The family of  initial data $rL\phi$ then approaches the original initial data in $C^{\frac{p}{1-p}}_{\ub}$. Note that $p$ can be arbitrarily chosen such that $p<\widetilde{p}=k^2$, this gives the   exterior instability, proving Theorem \ref{thm:exterior}.

\begin{remark}
Note that for any $p<\widetilde{p}=k^2$ and any $t\ne0$, \eqref{tlowerbound} holds for $u_1$ sufficiently close to $0$.  Since \eqref{Omegaupper-ext} holds for all $u\in (u_0,0)$, for any fixed $t\ne0$, $S_{\ub_1,u_1}$ is trapped for all $u_1$ sufficiently close to $0$ (and correspondingly, for all $\ub_1$ sufficiently small), more than just one single trapped surface. This is the same as the exterior perturbations studied in \cite{Chr99, Liu-Li}. The analysis in \cite{Chr91} implies that the whole future boundary of the resulting maximal future development is the future boundary of the  trapped region and is strictly spacelike, verifying the \emph{strong cosmic censorship conjecture} (see \cite{Chr99cqg} for more discussions).  As can be seen in the following proof of Theorem \ref{thm:interior}, the above argument does not work for interior perturbations. It is then not known whether there exist interior perturbations in $C^{\frac{p}{1-p}}$ leading to a maximal future development that has a strictly spacelike future boundary. The future boundary may have a small piece that is the regular future null cone of the singularity.  But as a trapped surface is still showed to form, the future boundary enters the trapped region and becomes spacelike eventually. 
\end{remark}

\begin{remark}
It is also interesting to see what happen at the threshold $p=k^2$, in which case $\widetilde{p}=p$ and Theorem \ref{fots2} can also be applied.  In view of \eqref{tlowerbound} and \eqref{deft}, a trapped surface can form if
$$|t|\ge\max\left\{\frac{1}{k},\left( \frac{1+k^2}{1-k^2}c_1\varepsilon^{-\frac{1+p}{1-p}} \right)^{\frac{1-p}{1-3p}}\right\}.$$
The exterior nonlinear stability was provided by Singh in \cite{JS2} at the threshold. So a trapped surface can form only when the perturbations are large at threshold regularity (which can however be viewed as a small perturbation at a regularity below the threshold).  This can be compared to the trapped surface formation result by An-Luk \cite{An-Luk} under a large scale-critical (viewed as the threshold in Minkowski background) incoming energy. 
\end{remark}

\subsection{Proof of Theorem \ref{thm:interior}: Interior instability}

As mentioned in Introduction, if we turn to interior instability, we cannot directly apply  Theorem \ref{fots2} using a single incoming null cone in the interior region, because the the incoming null cones in the  interior region have regular center. The conditions \eqref{Omegaupper} and \eqref{Omegaupper2} cannot hold for arbitrarily small $|u_1|$. Then in view of the condition \eqref{tlowerbound}, $t$ cannot be chosen arbitrarily small and hence no instability can be found in this way. To overcome this difficulty, our new approach  is to consider a family of interior null cones approaching the last one (the past null   cone of $\mathcal{O}$). Thanks to the $k$-self-similarity, the geometry of these null cones are particularly clear.

 Let $(u,r)$ be the self-similar Bondi coordinates in the $k$-self-similar solution. We fix an outgoing null cone, say $u=-1$, denoted by $C_{-1}^+$, which is viewed as the initial null cone.  We denote $C^-_s$, to be the incoming null cone in the interior region such that the self-similar parameter takes value $s$ at its intersection with $C_{-1}^+$ (see Figure \ref{fig:nakedinterior2}).  As proven in \cite{Chr94}, for $s<s_*$ these null cones reach the center. By scaling $u$ by a constant, we assume that the intersection of $C_{-1}^+$ to the last incoming null cone  $C_{-1}^+\cap C_{s_*}^-$ has area radius larger than $1$, and such that for $s$ sufficiently close to $s_*$, the area radius of $C_{-1}^+\cap C_{s}^-$ is still larger than $1$.  For different $s$, we will then apply Theorem \ref{fots2} to $\Cb_{0,s}=C^-_s$, $C_{u_{0,s},s}=C_{-1}^+$, with the associated double null coordinates $(\ub_s,u_s)$. Note that all $C_{u_{0,s},s}$ are the same null cone, on which $\ub_s$ is affine parameter with $\Omega_{0,s}(u_{0,s})\le 1$. We make a remark on notations that $u$ is the self-similar Bondi coordinate and $u_s$ is the optical function used in the Theorems before. Another observation is that  $u_s$ is equal to $-r$ on $\Cb_{0,s}$, which is   a coordinate independent quantity.  Moreover, we have $u_{0,s}\le -1$.

 \begin{figure}
\includegraphics[width=1.8 in]{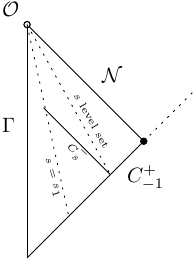}
\caption{}
\label{fig:nakedinterior2}
\end{figure}

Recall that $\zeta$ tends to $-k$ as $s\to s_*$. Therefore for any $\gamma>0$, we can always find some $s_1<s_*$ such that for any $s\in (s_1,s_*)$,
$$\zeta^2\ge( 1-\gamma)k^2.$$
Converted to the notations (with an additional subscript $s$) in our theorems,  it means that 
\begin{equation}\label{psiclosetok}\psi_s(u_s)^2\ge(1-\gamma) k^2,\end{equation}
for any $u_s\in[u_{0,s},u_{1,s}]$, where $-u_{1,s}$ equal to the area radius $r_s$ of $C^-_s\cap\{s=s_1\}$. Then the integral in \eqref{Omega_0} satisfies, for $u_s\in[u_{0,s},u_{1,s}]$,
\begin{equation*}\int_{u_{0,s}}^{u_s}\frac{\psi_s^2(u')}{|u'|}\D u'\ge -k^2(1-\gamma)\log\frac{|u_s|}{|u_{0,s}|},\end{equation*}
which then implies that for $u_s\in [u_{0,s},u_{1,s}]$,  recalling $\Omega_{0,s}(u_{0,s})\le 1$ and $u_{0,s}\le -1$, 
\begin{equation}\label{Omegaupperselfsimilar}\Omega_{0,s}^2(u_s)\le |u_s|^{(1-\gamma)k^2}.\end{equation}
By choosing possibly closer $s_1$, one may also have, for $s\in (s_1,s_*]$,
$$|\zeta|\le 2k, |\theta|\le \frac{2}{k},$$
which is, for $u_s\in [u_{0,s},u_{1,s}]$, 
\begin{equation}\label{psivarphiboundselfsimilar}|\psi_s|\le 2k, |(\Omega_{0,s}^2h_{0,s})^{-1}\varphi_k|\le\frac{2}{k}.\end{equation}
We choose $\widetilde{p}=(1-\gamma)k^2$ and $p<\widetilde{p}$ is arbitrarily chosen.  $\widetilde{p}$ and $p$ can be chosen independent of $s$. Taking $K=\frac{2}{k}$ we can apply Theorem \ref{fots2}. Let us take
$$(rL\phi)_s(\ub_s,u_{0,s})-\varphi_s(u_{0,s})=t_s\ub_s^{\frac{p}{1-p}}, 0\le\ub_s\le \ub_{1,s},$$
where $\ub_{1,s}$ is defined through \eqref{deltau1} and $t_s$ is defined by \eqref{deft}. Then a trapped surface froms at $S_{\ub_{1,s},u_{1,s}}$. In order to show that $t_s\to0$ as $s\to s_*$, we need to show that $|u_{1,s}|\to0$.

Consider the equation of the incoming null cone (ray) $C^-_s$, which is, according to the metric form \eqref{Bondimetric},
$$\frac{\D r}{\D u}=-\frac{1}{2}\mathrm{e}^{\nu-\lambda}.$$
Recalling that $s=\log(-\frac{r}{u})$, we will have, along $C_s^-$,
$$\frac{\D s}{\D u}=-\frac{1}{2r}\mathrm{e}^{\nu-\lambda}-\frac{1}{u},$$
therefore
\begin{equation*}\frac{\D r}{\D s}=\frac{r(\beta-1)}{\beta}=r(1-\alpha).\end{equation*}
Integrate the above equation from $s$ to $s_1$, we have
\begin{equation*} 
\log (-u_{1,s}) -\log (r\big|_{C^+_{-1}\cap C^-_s})=\int_s^{s_1}(1-\alpha)\D s'\sim \frac{1}{1-k^2}\log(s_*-s), s\to s_*.
\end{equation*}
Because $\log (r\big|_{C^+_{-1}\cap C^-_s})=s$, we have
\begin{equation}\label{estimateu1s}
\log (-u_{1,s}) -s_1=\int_s^{s_1}(-\alpha)\D s'\sim \frac{1}{1-k^2}\log(s_*-s), s\to s_*,
\end{equation}
in view of \eqref{alphas*}. This proves $|u_{1,s}|\to0$ at a precise rate.  

Finally, we transform the above perturbations into one fixed coordinate system.  We fix an affine parameter $v$ on $C_{-1}^+$ in the way that $v$ is a function of $r$ on $C_{-1}^+$ where $r=\mathrm{e}^{s}$ (recalling that $s=\log(-\frac{r}{u})$) and denote $r_*=\mathrm{e}^{s_*}$. We define $v$ through the initial condition 
$$v(r_*)=v_*, v'(r_*)=2.$$
For each $s$ sufficiently close to $s_*$, we choose
$$\ub_s=v-v_s$$
where $v_s=v(\mathrm{e}^s)$  is the value of $v$ at $C_s^-\cap C_{-1}^+=\Cb_{0,s}\cap C_{u_{0,s},s}$, on which $\ub_s=0$. 
We compute $\Omega_{0,s}(u_{0,s})$ to show that it is less than $1$ when $s$ sufficiently close to $s_*$. Recall that
\begin{align*}-2\Omega_{0,s}^2(u_{0,s})=&g\left(\frac{\partial}{\partial u_s}, \frac{\partial}{\partial \ub_s}\right)\big|_{\ub_s=0, u_s=u_{0,s}}\\
=&g\left(2\mathrm{e}^{\lambda-\nu}\frac{\partial}{\partial u}-\frac{\partial}{\partial r}, \frac{\partial}{\partial v}\right)\big|_{C_s^-\cap C_{-1}^+}=-2\mathrm{e}^{2\lambda(s)}\frac{\partial r}{\partial v}\big|_{C_s^-\cap C_{-1}^+}\end{align*}
by the expression of the metric in self-similar Bondi coordinates \eqref{Bondimetric} and $r$ can also be viewed as a function of $v$ on $C_{-1}^+$. When $s_1$ sufficiently close to $s_*$, we have $\frac{\partial r}{\partial v}\big|_{C_s^-\cap C_{-1}^+}\le\frac{1}{1+2k^2}$(which is larger than $\frac{1}{2}=\frac{\partial r}{\partial v}\big|_{C_{s_*}^-\cap C_{-1}^+}$) for all $s\in(s_1,s_*)$, then
$$\Omega^2_{0,s}(u_{0,s})\le \frac{1}{1+2k^2} \mathrm{e}^{2\lambda(s)}.$$
From \eqref{solvelambda} we have $\mathrm{e}^{2\lambda(s)}\to 1+k^2$ as $s\to s_*$, so when $s_1$ is sufficiently close to $s_*$, we will have $\Omega^2_{0,s}(u_{0,s})\le1$ for all $s\in (s_1,s_*)$. The perturbation, as a family (in $s\in (s_1,s_*)$) of function defined on the whole $C^+_{-1}$ starting from the center, is then in the form
\begin{equation}\label{formofperturbation}r\partial_v\phi(v;s)=\begin{cases}r\partial_v\phi_k(v), & v(0)\le v< v_s,\\ r\partial_v\phi_k(v_s)+t_s(v-v_s)^{\frac{p}{1-p}}, &v_s\le v< v_s+\ub_{1,s},\\
\text{connected continuously by linear function}, &v_s+\ub_{1,s}\le v< v_s+2\ub_{1,s},\\
r\partial_v\phi_k(v), &v\ge v_s+2\ub_{1,s}. \end{cases}\end{equation}
 where $\phi$ is the perturbed wave function and $\phi_k$ is the original wave function (with an asymptotically flat cut off for large $v$) in the $k$-self-similar naked singularity solution. Because $t_s\to0$, this proves $r\partial_v\phi(\cdot;s)$ converges to $r\partial_v\phi_k$ in $C^{\frac{p}{1-p}}_v$ (and also in AC). Since $\gamma$ in \eqref{psiclosetok} can be chosen arbitrarily small ($s_1$ changes as well), and $p$ can be arbitrarily close to $\widetilde{p}=(1-\gamma)k^2$. This proves  interior instability below the threshold.

At last we want to show that $v=v_s+2\ub_{1,s}< v_*=v(r_*)$ for $s$ sufficiently close to $s_*$, which means that the perturbations completely lie in the original interior region. To see this, we use \eqref{deltau1} and \eqref{estimateu1s}. From \eqref{estimateu1s}, for $s$ sufficiently close to $s_*$, we have, 
$$|u_{1,s}|\mathrm{e}^{-s_1}\le (s_*-s)^{\frac{1}{1-(1-\gamma)k^2}}.$$
Then for $s$ sufficiently close to $s_*$,
$$2\ub_{1,s}\le 2\varepsilon a_s^{-1} |u_{1,s}|^{\frac{(1-\gamma)k^2-p}{3}+1-p}\le 2\varepsilon\mathrm{e}^{s_1\left({\frac{(1-\gamma)k^2-p}{3}+1-p}\right)}(s_*-s)^{\frac{1}{1-(1-\gamma)k^2}\left({\frac{(1-\gamma)k^2-p}{3}+1-p}\right)}.$$
The power of $(s_*-s)$ is larger than $1$ if $p<(1-\gamma)k^2$.  Because $\frac{\partial r}{\partial v}$ is uniformly bounded  near $s_*$, hence 
$$v_*-v \approx r_*-r\approx s_*-s,$$
which shows that $v_s+2\ub_{1,s}<v_*$ for $s$ sufficiently close to $s_*$.  Finally, the above relation allows us to write the perturbation \eqref{formofperturbation} in the coordinate $r$ and also in terms of $\partial_r(r\phi)$. This proves Theorem \ref{thm:interior}. We can see moreover that the trapped surface $S_{\ub_{1,s},u_{1,s}}$ is located in the original interior region.

\begin{remark}The numbers $\gamma,\widetilde{p}$ and $p$ can be chosen depending on $s$, so that we can construct a single family of perturbations in any regularity below the threshold.
\end{remark}

\section{Proof of Theorem \ref{thm:BVinterior}: Interior instability of general naked singularities}

At last we deal with general naked singularity solutions. Let us apply  Theorem \ref{estimate} for $p=0$ (which is exactly that had been established in \cite{Liu-Li}).  As long as \eqref{smallness} holds, estimate \eqref{proof-estimate-Lphiimp} holds without $\varepsilon^{-\delta}$, then we always have
\begin{equation}\label{lowerboundLphi}|rL\phi(\ub,u)|\ge |rL\phi(\ub,u_0)-\varphi(u_0)+\varphi(u)|- c\ub|u|^{-1}w_1^2a^2\end{equation}
for some $c>0$. On the other hand, using \eqref{Omega_0} we observe that 
\begin{equation}\label{estimate-w}w^2(u)\le|\log\Omega_0(u)|, u\in [u_0,u_1].\end{equation}
Then we take $\Cb_{0,n}=\mathcal{N}_n$ and $C_{u_{0,n},n}=C_{-1}^+$ with  associated double null coordinates $(\ub_{n}, u_n)$ to apply Theorem \ref{estimate}, then  \eqref{lowerboundLphi} holds for every $n$. We consider two separated cases that $\varphi_n(u)$ is uniformly bounded in both $n$ and $u_n\in(u_{0,n},0)$ or not.

Case 1: Suppose that (possibly passing to a subsequence) there exists $u_{1,n}$ such that $|\varphi_n(u_{1,n})|\to\infty$. We in addition choose $u_{1,n}$ such that 
$$|\varphi_n(u_{1,n})|=\sup_{u_{0,n}\le u_n\le u_{1,n}}|\varphi_n(u_n)|\to\infty.$$
  It is not necessary that $u_{1,n}\to0$. Let us apply the estimate \eqref{lowerboundLphi} to each $n$. Set $\ub_{1,n}$ as
\begin{equation}\label{case1ub1n}\ub_{1,n}|u_{1,n}|^{-1}|\log\Omega_{0,n}(u_{1,n})||\varphi_n(u_{1,n})|=\varepsilon\end{equation} 
where $\varepsilon$ is the constant determined in Theorem \ref{estimate}.  Choose $n$ sufficiently large such that for any $\ub_n\in[0,\ub_{1,n}]$,
$$|(rL\phi)_n(\ub_n,u_{0,n})-\varphi(u_{0,n})|\le \frac{1}{2}|\varphi_n(u_{1,n})|$$
 and hence $a_n= |\varphi_n(u_{1,n})|$ for sufficiently large $n$. Then \eqref{smallness} holds and  we will have, for $\ub_n\in[0,\ub_{1,n}]$, 
\begin{align*}|(rL\phi)_n(\ub_n,u_{1,n})|\ge&|\varphi_n(u_{1,n})|- |(rL\phi)_n(\ub_n,u_{0,n})-\varphi(u_{0,n})|-c\ub_n|u_{1,n}|^{-1}w_{1,n}^2 |\varphi_n(u_{1,n})|^2\\
\ge&(\frac{1}{2}-c\varepsilon)|\varphi_n(u_{1,n})|\ge\frac{1}{4}|\varphi_n(u_{1,n})|\end{align*}
if we require in addition $4c\varepsilon \le 1$.  We will then have, using \eqref{case1ub1n},
$$\int_0^{\ub_{1,n}}|(rL\phi)_n(\ub_n,u_{1,n})|^2\D\ub_n\ge\frac{1}{16}\ub_{1,n}|\varphi_n(u_{1,n})|^2=\frac{\varepsilon}{16}\frac{|\varphi_n(u_{1,n})|}{|\log\Omega_{0,n}(u_{1,n})|}|u_{1,n}|\ge 2^4\Omega^2_{0,n}(u_{1,n})|u_{1,n}|$$
for sufficiently large $n$. Then by \eqref{Dh}, 
$$h(\ub_{1,n}, u_{1,n})-h_0(u_{1,n})\le -2^{-3} \Omega_{0,n}^{-2}(u_{1,n})|u_{1,n}|^{-1}\int_0^{\ub_{1,n}} (rL\phi)_n^2\D\ub_n\le -2.$$
Then $h(\ub_{1,n},u_{1,n})<0$ and $S_{\ub_{1,n},u_{1,n}}$ is trapped. Since the interior region does not contain trapped surfaces, logically this includes two possibilities, one of which is that $S_{\ub_{1,n},u_{1,n}}$ is located in the exterior and the singularity is then not naked, and the other of which is that $|\varphi_n(u)|$ is uniformly bounded for all $n$.

Case 2: There exists a $K\ge1$ such that $|\varphi_n(u_n)|\le K$ for all $u_n\in[u_{0,n}, 0)$. 
Now we will use the fact that 
$$\Omega_{0,n}^2(0)\to 0, n\to\infty$$
which is a consequence of \eqref{blueshfit4} and \eqref{Omega_0}.  As $\Omega_{0,n}^2(u)$ is monotonically decreasing in $u\to0$, we can find  an increasing family $u_{1,n}\to 0$ so that $\Omega_{0,n}^2(u_{1,n})\to0$ as $n\to\infty$.  In this case we set
\begin{equation}\label{case2ub1n}\ub_{1,n}=\Omega_{0,n}^{\frac{2}{3}}(u_{1,n})|u_{1,n}|.\end{equation}
We fix a constant, say  $\gamma=\frac{1}{10}$. We also assume (maybe for larger $K$)
$$|(rL\phi)_n(\ub_n, u_{0,n})-\varphi_n(u_{0,n})|\le K, 0\le \ub_n\le \ub_1$$
for all $n$ since it is about the data of the given solution.  Then $a_n\le K$ for all $n$.  By \eqref{estimate-w}, \eqref{case2ub1n} implies  that \eqref{smallness} holds for  sufficiently large $n$. Therefore we can apply \eqref{lowerboundLphi} for sufficiently large $n$.

 Suppose that
\begin{equation}\label{lowerboundgeneral}\int_0^{\ub_{1,n}}|(rL\phi)_n(\ub_n, u_{0,n})-\varphi_n(u_{0,n})+\varphi_n(u_{1,n})|^2\D\ub_n> \Omega_{0,n}^{2-2\gamma}(u_{1,n})|u_{1,n}|\end{equation}
holds for a subsequence (still denoted by subscript $n$). Applying  \eqref{lowerboundLphi} we have
\begin{align*}|(rL\phi)_n(\ub_n,u_{1,n})|\ge& |(rL\phi)_n(\ub_n,u_{0,n})-\varphi_n(u_{0,n})+\varphi_n(u_{1,n})|- c\ub_{n}|u_{1,n}|^{-1}w_{1,n}^2K^2\\
\ge& |(rL\phi)_n(\ub_n,u_{0,n})-\varphi_n(u_{0,n})+\varphi_n(u_{1,n})|- c\Omega_{0,n}^{\frac{2}{3}}(u_{1,n})w_{1,n}^2K^2.\end{align*}
For sufficiently large $n$, $c\Omega_{0,n}^{\frac{2}{3}}(u_{1,n})w_{1,n}^2 K^2\le \frac{1}{2}\Omega_{0,n}^{\frac{2}{3}-\gamma}(u_{1,n})$ where we have used \eqref{estimate-w} to control $w_{1,n}$, then
$$\int_{0}^{\ub_{1,n}}|(rL\phi)_n(\ub_n,u_{1,n})|^2\D\ub_n>\frac{1}{4}\Omega_{0,n}^{2-2\gamma}(u_{1,n})|u_{1,n}|\ge 2^4\Omega^2_{0,n}(u_{1,n})|u_{1,n}|$$
for sufficiently large $n$. As in Case 1, $S_{\ub_{1,n},u_{1,n}}$ is trapped. 

Suppose at last that
$$\int_0^{\ub_{1,n}}|(rL\phi)_n(\ub_n, u_{0,n})-\varphi_n(u_{0,n})+\varphi_n(u_{1,n})|^2\D\ub_n\le \Omega_{0,n}^{2-2\gamma}(u_{1,n})|u_{1,n}|$$
for any $n$, we will show that can perturb the initial data so that ``$\le$'' becomes ``$>$'', that is \eqref{lowerboundgeneral} holds for sufficiently large $n$, and hence a trapped surface will form. In fact, we set
\begin{equation}\label{Case2perturb}(rL\phi)_n^{\text{perturb}}(\ub_n,u_{0,n})= (rL\phi)_n(\ub_n,u_{0,n})+\frac{\Omega_{0,n}^{\frac{2}{3}-2\gamma}(u_{1,n})}{\ub_{1,n}}\ub_n, 0\le\ub_n\le \ub_{1,n}.\end{equation}
Note that this perturbation will not change $\varphi_n$, so
\begin{align*}&\int_0^{\ub_{1,n}}|(rL\phi)_n^{\text{perturb}}(\ub_n, u_{0,n})-\varphi_n(u_{0,n})+\varphi_n(u_{1,n})|^2\D\ub_n\\
\ge&\int_0^{\ub_{1,n}}\left|\frac{\Omega_{0,n}^{\frac{2}{3}-2\gamma}(u_{1,n})}{\ub_{1,n}}\ub_n\right|^2\D\ub_n-\frac{1}{2}\int_0^{\ub_{1,n}}|(rL\phi)_n(\ub_n, u_{0,n})-\varphi_n(u_{0,n})+\varphi_n(u_{1,n})|^2\D\ub_n\\
\ge&\frac{1}{3}\Omega_{0,n}^{2-4\gamma}(u_{1,n})|u_{1,n}| -\frac{1}{2} \Omega_{0,n}^{2-2\gamma}(u_{1,n})|u_{1,n}|\\
>& \Omega_{0,n}^{2-2\gamma}(u_{1,n})|u_{1,n}|\end{align*}
for sufficiently large $n$. As in  the case assuming \eqref{lowerboundgeneral},  $S_{\ub_{n,1},u_{1,n}}$ is trapped. 

We may extend the perturbations \eqref{Case2perturb} to $\ub_n<0$ just as the original data, and to $\ub_n>\ub_{1,n}$ linearly back to the original data after a fixed distance so that it is absolutely continuous. By translating back to the the fixed coordinate $r$ as in $k$-self-similar cases, the perturbed data converges to the original data in BV topology. This proves Theorem \ref{thm:BVinterior}. If we consider $C^1$ data and solutions, we can obtain $C^1$ perturbations (small in BV topology) by cutting off argument.

\end{document}